\renewcommand{\baselinestretch}{1.0}
\definecolor{ForestGreen}{rgb}{0.1333,0.5451,0.1333}
\newcommand{\showccc}[0]{0}
\newcommand{\ccc}[2][nothing]{
	\ifthenelse{\showccc=0}{}{
		\ensuremath{^{\Lsh\Rsh}}\marginpar{\raggedright\tiny\textsf{%
				\ifthenelse{\equal{#1}{nothing}}{}{\textbf{#1}\\}#2}}}}
\newcounter{hours}\newcounter{minutes}
\newcommand{\hhmm}{%
	\setcounter{hours}{\time/60}%
	\setcounter{minutes}{\time-\value{hours}*60}%
	\ifthenelse{\value{hours}<10}{0}{}\thehours:%
	\ifthenelse{\value{minutes}<10}{0}{}\theminutes}
\newtheorem{theorem}{Theorem}[section]
\newtheorem{claim}[theorem]{Claim}
\newtheorem{proposition}[theorem]{Proposition}
\newtheorem{corollary}{Corollary}
\newtheorem{definition}[theorem]{Definition}
\newtheorem{conjecture}[theorem]{Conjecture}
\newtheorem{remark}[theorem]{Remark}
\newtheorem{lemma}{Lemma}
\newtheorem{fact}[theorem]{Fact}
\newtheorem{example}[theorem]{Example}
\newtheorem{axiom}[theorem]{Axiom}
\newtheorem{property}[theorem]{Property}
\newtheorem{invariant}[theorem]{Invariant}
\newcommand{\Ohstar}{\ensuremath{\Oh^\star}}
\newcommand{\oh}{\ensuremath{o}}
\newcommand{\ohstar}{\ensuremath{\oh^\star}}
\newcommand{\bv}[1]{\mathbf{#1}}
\def\D{\mathrm{d}}
\newcommand{\dmax}{\norm{d}_{\infty}}
\newcommand{\QED}[0]{\hfill\ensuremath{\blacksquare}\medspace\\}
\newcommand{\blah}[0]{\ensuremath{\blacktriangleright
		\clubsuit\diamondsuit\heartsuit\spadesuit
		\blacktriangleleft}\xspace}
\newcommand{\set}[1]{\ensuremath{\{\,{#1}\,\}}}
\newcommand{\stm}[0]{\ensuremath{\,\boldsymbol{|}\,}}
\newcommand{\size}[1]{\ensuremath{|{#1}|}}
\newcommand{\etal}[0]{\emph{et~al.}\xspace}
\newcommand{\etc}[0]{\emph{etc.}\xspace}
\newcommand{\ie}[0]{\emph{i.e.}\xspace}
\newcommand{\eg}[0]{\emph{e.g.}\xspace}
\newcommand{\vs}[0]{\emph{vs.}\xspace}
\newcommand{\half}{\frac{1}{2}}
\newcommand{\cart}[2]{
	\ensuremath{\{{#1} {\boldsymbol\otimes} {#2}\}}}
\newcommand{\supports}[0]{\succcurlyeq}
\newcommand{\TT}[0]{
	\ensuremath{{}^{\!\normalfont\textsf{T}\!}}}
\newcommand{\IV}[0]{%
	\ensuremath{{}^{\normalfont\textsf{-}\!1\!}}}
\newcommand{\vv}[1]{\ensuremath{\boldsymbol{#1}}}
\newcommand{\sqrootinv}[1]{\ensuremath{%
		{#1}^{\textsf{-}\frac{1}{2}}}}
\newcommand{\sqroot}[1]{\ensuremath{{#1}^{\frac{1}{2}}}}
\newcommand{\R}[0]{\ensuremath{\mathbb{R}}}
\newcommand{\diag}[1]{\textbf{\textup{diag}}\left(#1\right)}
\newcommand{\reminder}[1]{{\textsf{\textcolor{red}{[#1]}}}}
\DeclareMathOperator{\nullsp}{null}
\DeclareMathOperator{\spansp}{span}
\DeclareMathOperator{\congest}{cong}
\DeclareMathOperator{\dil}{dil}
\newcommand{\trace}[1]{\textbf{tr}\left[#1\right]}
\newcommand{\norm}[1]{\|#1\|}
\newcommand{\norms}[1]{\|#1\|}
\newcommand{\tr}[0]{\textup{Tr}}
\newcommand{\E}[0]{\mathbb{E}}
\newcommand{\opt}[0]{\textup{OPT}}
\newcommand{\topt}[0]{\tilde{\textup{OPT}}}
\newcommand{\prox}[0]{\textup{Prox}}
\newcommand{\kjtian}[1]{{\color{red}\textbf{kjtian:} #1 }}
\newcommand{\todo}[1]{{\color{red}\textbf{TODO:}#1}}
\newcommand{\onesv}{\textbf{1}}
\newcommand{\obj}{\mathcal{OBJ}}
\newcommand{\identity}{\textbf{I}}
\newcommand{\nbr}[1]{\left\|#1\right\|}
\newcommand{\rnorm}[1]{\nbr{#1}_\matr}
\newcommand{\rpnorm}[1]{\nbr{#1}_{\matr'}}
\newcommand{\expct}[3]{\ensuremath{\mathop{\text{\normalfont \textbf{E}}_{#1}}_{#2}}\left[#3\right]}
\newcommand{\prob}[1]{\ensuremath{\mathop{\text{\normalfont \textbf{P}}}}\left[#1\right]}
\newcommand{\epsmac}{\epsilon_{m}}
\newcommand{\vecd}{\textbf{d}}
\newcommand{\matcholesky}{\textbf{C}}
\newcommand{\tvect}[2]{
	\ensuremath{\Bigl(\negthinspace\begin{smallmatrix}#1\\#2\end{smallmatrix}\Bigr)}}
\newcommand{\rasmus}[1]{{\color{green} RASMUS: #1}}
\newcommand{\jakub}[1]{{\color{red} JAKUB: #1}}
\newcommand{\richard}[1]{{\color{blue} RICHARD: #1}}
\global\long\def\R{\mathbb{R}}
\global\long\def\Rn{\mathbb{R}^{n}}
\global\long\def\Rm{\mathbb{R}^{m}}
\global\long\def\Z{\mathbb{Z}}
\global\long\def\rPos{\R^{+}}
\global\long\def\rNonNeg{\R^{\geq0}}
\global\long\def\ellOne{\ell_{1}}
\global\long\def\ellTwo{\ell_{2}}
\global\long\def\ellInf{\ell_{\infty}}
\global\long\def\ellP{\ell_{p}}
\global\long\def\otilde{\tilde{\mathcal{O}}}
\global\long\def\boldVar#1{\mathbf{#1}}
\global\long\def\mvar#1{\boldVar{#1}}
\global\long\def\vvar#1{\vec{#1}}
\newcommand{\defeq}{\stackrel{\mathrm{{\scriptscriptstyle def}}}{=}}
\newcommand{\inprod}[2]{\langle#1, #2\rangle}
\newcommand{\inprodFull}[2]{\left\langle#1, #2\right\rangle}
\newcommand{\Oh}[1]{O\left(#1\right)}
\newcommand{\tildeOh}[1]{\tilde{O}\left(#1\right)}
\newcommand{\1}{\mathbf{1}}
\newcommand{\cmax}{\snorm{C}_{\max}}
\newcommand{\urc}{\mathcal{U}_{r, c}}
\global\long\def\gradient{\bigtriangledown}
\global\long\def\grad{\gradient}
\global\long\def\hessian{\gradient^{2}}
\global\long\def\hess{\hessian}
\global\long\def\jacobian{\mvar J}
\global\long\def\gradIvec#1{\vvar{f_{#1}}}
\global\long\def\gradIval#1{f_{#1}}
\global\long\def\onesVec{\vec{\mathbb{1}}}
\global\long\def\setVec#1{\onesVec_{#1}}
\global\long\def\indicVec#1{\onesVec_{#1}}
\global\long\def\trans#1{#1^{\intercal}}
\global\long\def\ith#1#2{{#1}^{(#2)}}
\global\long\def\specGeq{\succeq}
\global\long\def\specLeq{\preceq}
\global\long\def\specGt{\succ}
\global\long\def\specLt{\prec}
\global\long\def\ithEigValOf#1#2{\ith{\lambda}{#2}_{#1}}
\global\long\def\ithEigVecOf#1#2{\ith{\vvar v}{#2}_{#1}}
\global\long\def\innerProduct#1#2{\big\langle#1 , #2 \big\rangle}
\global\long\def\norm#1{\left\|#1\right\|}
\global\long\def\snorm#1{\|#1\|}
\global\long\def\normA#1{\norm{#1}_{\ma}}
\global\long\def\normInf#1{\norm{#1}_{\infty}}
\global\long\def\normOne#1{\norm{#1}_{1}}
\global\long\def\normTwo#1{\norm{#1}_{2}}
\global\long\def\normCoord#1#2{\norm{#1}_{(#2)}}
\global\long\def\normDual#1{{\norm{#1}^{*}}}
\global\long\def\partiali#1{{\partialiOp}_{#1}}
\global\long\def\dualVec#1{{#1}^{\#}}
\global\long\def\dualVecOne#1{{#1}^{\#_{1}}}
\global\long\def\dualVecInf#1{{#1}^{\#_{\infty}}}
\global\long\def\dualVecP#1{{#1}^{\#_{p}}}
\global\long\def\sharpv#1{\dualVec{#1}}
\global\long\def\sharpgrad#1{\dualVec{\gradient#1}}
\global\long\def\OPT{\mathrm{opt}}
\global\long\def\fopt{f^{*}}
\global\long\def\va{\vvar a}
\global\long\def\vb{\vvar b}
\global\long\def\vc{\vvar c}
\global\long\def\vd{\vvar d}
\global\long\def\ve{\vvar e}
\global\long\def\vf{\vvar f}
\global\long\def\vg{\vvar g}
\global\long\def\vh{\vvar h}
\global\long\def\vl{\vvar l}
\global\long\def\vm{\vvar m}
\global\long\def\vn{\vvar n}
\global\long\def\vo{\vvar o}
\global\long\def\vp{\vvar p}
\global\long\def\vq{\vvar q}
\global\long\def\vr{\vvar r}
\global\long\def\vs{\vvar s}
\global\long\def\vu{\vvar u}
\global\long\def\vv{\vvar v}
\global\long\def\vw{\vvar w}
\global\long\def\vx{\vvar x}
\global\long\def\vy{\vvar y}
\global\long\def\vz{\vvar z}
\global\long\def\vpi{\vvar{\pi}}
\global\long\def\vxi{\vvar{\xi}}
\global\long\def\vchi{\vvar{\chi}}
\global\long\def\valpha{\vvar{\alpha}}
\global\long\def\veta{\vvar{\eta}}
\global\long\def\vlambda{\vvar{\lambda}}
\global\long\def\vmu{\vvar{\mu}}
\global\long\def\vdelta{\vvar{\Delta}}
\global\long\def\vsigma{\vvar{\sigma}}
\global\long\def\vzero{\vvar 0}
\global\long\def\vones{\vvar 1}
\global\long\def\vhx{\hat{x}}
\global\long\def\vhy{\hat{y}}
\global\long\def\vhs{\hat{s}}
\global\long\def\vhc{\hat{c}}
\global\long\def\vbx{\bar{x}}
\global\long\def\vby{\bar{y}}
\global\long\def\vbs{\bar{s}}
\global\long\def\xopt{\vvar x^{*}}
\global\long\def\varVec{\vvar x}
\global\long\def\varVecA{\vvar x}
\global\long\def\varVecB{\vvar y}
\global\long\def\varMat{\mvar A}
\global\long\def\varMatA{\mvar A}
\global\long\def\varMatB{\mvar B}
\global\long\def\varSubedges{H}
\global\long\def\ma{\mvar A}
\global\long\def\mb{\mvar B}
\global\long\def\mc{\mvar C}
\global\long\def\md{\mvar D}
\global\long\def\mf{\mvar F}
\global\long\def\mg{\mvar G}
\global\long\def\mh{\mvar H}
\global\long\def\mI{\mvar I}
\global\long\def\mm{\mvar M}
\global\long\def\mq{\mvar Q}
\global\long\def\mr{\mvar R}
\global\long\def\ms{\mvar S}
\global\long\def\mt{\mvar T}
\global\long\def\mU{\mvar U}
\global\long\def\mv{\mvar V}
\global\long\def\mw{\mvar W}
\global\long\def\mx{\mvar X}
\global\long\def\my{\mvar Y}
\global\long\def\mz{\mvar Z}
\global\long\def\mproj{\mvar P}
\global\long\def\mSigma{\mvar{\Sigma}}
\global\long\def\mLambda{\mvar{\Lambda}}
\global\long\def\mha{\hat{\mvar A}}
\global\long\def\mzero{\mvar 0}
\global\long\def\mlap{\mvar{\mathcal{L}}}
\global\long\def\mpi{\mvar{\mathcal{\Pi}}}
\global\long\def\mdiag{\mvar{\texttt{d}iag}}
\global\long\def\paramLengths{\vvar l}
\global\long\def\paramCapacity{\vvar{\mu}}
\global\long\def\paramWeights{\vvar w}
\global\long\def\weightVec{\vvar w}
\global\long\def\convexSet{Q}
\global\long\def\convexSetElement{\vvar q}
\global\long\def\oracle{\mathcal{O}}
\global\long\def\moracle{\mvar O}
\global\long\def\oracleOf#1{\oracle\left(#1\right)}
\global\long\def\nSamples{s}
\global\long\def\simplex{\Delta}
\global\long\def\abs#1{\left|#1\right|}
\global\long\def\capacityMatrix{\mvar U}
\global\long\def\load{\mathrm{load}}
\global\long\def\rload{\mathrm{rload}}
\global\long\def\tpath{\mathrm{path}}
\global\long\def\cost{\mathrm{cost}}
\global\long\def\tr{\mathrm{tr}}
\global\long\def\timeNearlyOp{\tilde{\mathcal{O}}}
\global\long\def\timeNearlyLinear{\timeNearlyOp}
\global\long\def\varFun{f}
\global\long\def\funLip{L}
\global\long\def\funCon{\mu}
\global\long\def\stepOpt{T}
\global\long\def\stepOptCoordinate#1{\stepOpt_{(#1)}}
\global\long\def\reff#1{R_{#1}^{\text{eff}}}
\global\long\def\im{\mathrm{im}}
\global\long\def\infseq#1#2{\{#1\}_{#2 = 0}^{\infty}}
\global\long\def\ceil#1{\left\lceil #1 \right\rceil }
\global\long\def\runtime{\mathcal{T}}
\global\long\def\timeOf#1{\runtime\left(#1\right)}
\global\long\def\domain{\mathcal{D}}
\global\long\def\argmin{\mathrm{argmin}}
\global\long\def\nnz{\mathrm{nnz}}
\global\long\def\vol{\mathrm{vol}}
\global\long\def\supp{\mathrm{supp}}
\global\long\def\dist{\mathcal{D}}
\newcommand{\smax}{\textup{smax}}
\newcommand{\sidford}[1]{{\{\color{green}\textbf{sidford:} #1 \}}}
  \newcommand{\cSTOC}[1]{\nth{\intcalcSub{#1}{1968}}\ Annual\ ACM\ Symposium\ on\ Theory\ of\ Computing\ (STOC)}
  \newcommand{\cFSTTCS}[1]{\nth{\intcalcSub{#1}{1980}}\ International\ Conference\ on\ Foundations\ of\ Software\ Technology\ and\ Theoretical\ Computer\ Science\ (FSTTCS)}
  \newcommand{\cCCC}[1]{\nth{\intcalcSub{#1}{1985}}\ Annual\ IEEE\ Conference\ on\ Computational\ Complexity\ (CCC)}
  \newcommand{\cFOCS}[1]{\nth{\intcalcSub{#1}{1959}}\ Annual\ IEEE\ Symposium\ on\ Foundations\ of\ Computer\ Science\ (FOCS)}
  \newcommand{\cRANDOM}[1]{\nth{\intcalcSub{#1}{1996}}\ International\ Workshop\ on\ Randomization\ and\ Computation\ (RANDOM)}
  \newcommand{\cISSAC}[1]{#1\ International\ Symposium\ on\ Symbolic\ and\ Algebraic\ Computation\ (ISSAC)}
  \newcommand{\cICALP}[1]{\nth{\intcalcSub{#1}{1973}}\ International\ Colloquium\ on\ Automata,\ Languages and\ Programming\ (ICALP)}
  \newcommand{\cCOLT}[1]{\nth{\intcalcSub{#1}{1987}}\ Annual\ Conference\ on\ Computational\ Learning\ Theory\ (COLT)}
  \newcommand{\cCSR}[1]{\nth{\intcalcSub{#1}{2005}}\ International\ Computer\ Science\ Symposium\ in\ Russia\ (CSR)}
  \newcommand{\cMFCS}[1]{\nth{\intcalcSub{#1}{1975}}\ International\ Symposium\ on\ the\ Mathematical\ Foundations\ of\ Computer\ Science\ (MFCS)}
  \newcommand{\cPODS}[1]{\nth{\intcalcSub{#1}{1981}}\ Symposium\ on\ Principles\ of\ Database\ Systems\ (PODS)}
  \newcommand{\cSODA}[1]{\nth{\intcalcSub{#1}{1989}}\ Annual\ ACM-SIAM\ Symposium\ on\ Discrete\ Algorithms\ (SODA)}
  \newcommand{\cNIPS}[1]{Advances\ in\ Neural\ Information\ Processing\ Systems\ \intcalcSub{#1}{1987} (NIPS)}
  \newcommand{\cWALCOM}[1]{\nth{\intcalcSub{#1}{2006}}\ International\ Workshop\ on\ Algorithms\ and\ Computation\ (WALCOM)}
  \newcommand{\cSoCG}[1]{\nth{\intcalcSub{#1}{1984}}\ Annual\ Symposium\ on\ Computational\ Geometry\ (SCG)}
  \newcommand{\cKDD}[1]{\nth{\intcalcSub{#1}{1994}}\ ACM\ SIGKDD\ International\ Conference\ on\ Knowledge\ Discovery\ and\ Data\ Mining\ (KDD)}
  \newcommand{\cICML}[1]{\nth{\intcalcSub{#1}{1983}}\ International\ Conference\ on\ Machine\ Learning\ (ICML)}
  \newcommand{\cAISTATS}[1]{\nth{\intcalcSub{#1}{1997}}\ International\ Conference\ on\ Artificial\ Intelligence\ and\ Statistics\ (AISTATS)}
  \newcommand{\cITCS}[1]{\nth{\intcalcSub{#1}{2009}}\ Conference\ on\ Innovations\ in\ Theoretical\ Computer\ Science\ (ITCS)}
  \newcommand{\cPODC}[1]{{#1}\ ACM\ Symposium\ on\ Principles\ of\ Distributed\ Computing\ (PODC)}
  \newcommand{\cAPPROX}[1]{\nth{\intcalcSub{#1}{1997}}\ International\ Workshop\ on\ Approximation\ Algorithms\ for\  Combinatorial\ Optimization\ Problems\ (APPROX)}
  \newcommand{\cSTACS}[1]{\nth{\intcalcSub{#1}{1983}}\ International\ Symposium\ on\ Theoretical\ Aspects\ of\  Computer\ Science\ (STACS)}
  \newcommand{\cMTNS}[1]{\nth{\intcalcSub{#1}{1991}}\ International\ Symposium\ on\ Mathematical\ Theory\ of\  Networks\ and\ Systems\ (MTNS)}
  \newcommand{\cICM}[1]{International\ Congress\ of\ Mathematicians\ {#1} (ICM)}
  \newcommand{\pSTOC}[1]{Preliminary\ version\ in\ the\ \cSTOC{#1}, #1}
  \newcommand{\pFSTTCS}[1]{Preliminary\ version\ in\ the\ \cFSTTCS{#1}, #1}
  \newcommand{\pCCC}[1]{Preliminary\ version\ in\ the\ \cCCC{#1}, #1}
  \newcommand{\pFOCS}[1]{Preliminary\ version\ in\ the\ \cFOCS{#1}, #1}
  \newcommand{\pRANDOM}[1]{Preliminary\ version\ in\ the\ \cRANDOM{#1}, #1}
  \newcommand{\pISSAC}[1]{Preliminary\ version\ in\ the\ \cISSAC{#1}, #1}
  \newcommand{\pICALP}[1]{Preliminary\ version\ in\ the\ \cICALP{#1}, #1}
  \newcommand{\pCOLT}[1]{Preliminary\ version\ in\ the\ \cCOLT{#1}, #1}
  \newcommand{\pCSR}[1]{Preliminary\ version\ in\ the\ \cCSR{#1}, #1}
  \newcommand{\pMFCS}[1]{Preliminary\ version\ in\ the\ \cMFCS{#1}, #1}
  \newcommand{\pPODS}[1]{Preliminary\ version\ in\ the\ \cPODS{#1}, #1}
  \newcommand{\pSODA}[1]{Preliminary\ version\ in\ the\ \cSODA{#1}, #1}
  \newcommand{\pNIPS}[1]{Preliminary\ version\ in\ \cNIPS{#1}, #1}
  \newcommand{\pWALCOM}[1]{Preliminary\ version\ in\ the\ \cWALCOM{#1}, #1}
  \newcommand{\pSoCG}[1]{Preliminary\ version\ in\ the\ \cSoCG{#1}, #1}
  \newcommand{\pKDD}[1]{Preliminary\ version\ in\ the\ \cKDD{#1}, #1}
  \newcommand{\pICML}[1]{Preliminary\ version\ in\ the\ \cICML{#1}, #1}
  \newcommand{\pAISTATS}[1]{Preliminary\ version\ in\ the\ \cAISTATS{#1}, #1}
  \newcommand{\pITCS}[1]{Preliminary\ version\ in\ the\ \cITCS{#1}, #1}
  \newcommand{\pPODC}[1]{Preliminary\ version\ in\ the\ \cPODC{#1}, #1}
  \newcommand{\pAPPROX}[1]{Preliminary\ version\ in\ the\ \cAPPROX{#1}, #1}
  \newcommand{\pSTACS}[1]{Preliminary\ version\ in\ the\ \cSTACS{#1}, #1}
  \newcommand{\pMTNS}[1]{Preliminary\ version\ in\ the\ \cMTNS{#1}, #1}
  \newcommand{\pICM}[1]{Preliminary\ version\ in\ the\ \cICM{#1}, #1}
  \newcommand{\STOC}[1]{Proceedings\ of\ the\ \cSTOC{#1}}
  \newcommand{\FSTTCS}[1]{Proceedings\ of\ the\ \cFSTTCS{#1}}
  \newcommand{\CCC}[1]{Proceedings\ of\ the\ \cCCC{#1}}
  \newcommand{\FOCS}[1]{Proceedings\ of\ the\ \cFOCS{#1}}
  \newcommand{\RANDOM}[1]{Proceedings\ of\ the\ \cRANDOM{#1}}
  \newcommand{\ISSAC}[1]{Proceedings\ of\ the\ \cISSAC{#1}}
  \newcommand{\ICALP}[1]{Proceedings\ of\ the\ \cICALP{#1}}
  \newcommand{\COLT}[1]{Proceedings\ of\ the\ \cCOLT{#1}}
  \newcommand{\CSR}[1]{Proceedings\ of\ the\ \cCSR{#1}}
  \newcommand{\MFCS}[1]{Proceedings\ of\ the\ \cMFCS{#1}}
  \newcommand{\PODS}[1]{Proceedings\ of\ the\ \cPODS{#1}}
  \newcommand{\SODA}[1]{Proceedings\ of\ the\ \cSODA{#1}}
  \newcommand{\NIPS}[1]{\cNIPS{#1}}
  \newcommand{\WALCOM}[1]{Proceedings\ of\ the\ \cWALCOM{#1}}
  \newcommand{\SoCG}[1]{Proceedings\ of\ the\ \cSoCG{#1}}
  \newcommand{\KDD}[1]{Proceedings\ of\ the\ \cKDD{#1}}
  \newcommand{\ICML}[1]{Proceedings\ of\ the\ \cICML{#1}}
  \newcommand{\AISTATS}[1]{Proceedings\ of\ the\ \cAISTATS{#1}}
  \newcommand{\ITCS}[1]{Proceedings\ of\ the\ \cITCS{#1}}
  \newcommand{\PODC}[1]{Proceedings\ of\ the\ \cPODC{#1}}
  \newcommand{\APPROX}[1]{Proceedings\ of\ the\ \cAPPROX{#1}}
  \newcommand{\STACS}[1]{Proceedings\ of\ the\ \cSTACS{#1}}
  \newcommand{\MTNS}[1]{Proceedings\ of\ the\ \cMTNS{#1}}
  \newcommand{\ICM}[1]{Proceedings\ of\ the\ \cICM{#1}}
  \newcommand{\arXiv}[1]{\href{http://arxiv.org/abs/#1}{arXiv:#1}}
  \newcommand{\farXiv}[1]{Full\ version\ at\ \arXiv{#1}}
  \newcommand{\parXiv}[1]{Preliminary\ version\ at\ \arXiv{#1}}
  \newcommand{\CoRR}{Computing\ Research\ Repository\ (CoRR)}
  \newcommand{\cECCC}[2]{\href{http://eccc.hpi-web.de/report/20#1/#2/}{Electronic\ Colloquium\ on\ Computational\ Complexity\ (ECCC),\ Technical\ Report\ TR#1-#2}}
  \newcommand{\ECCC}{Electronic\ Colloquium\ on\ Computational\ Complexity\ (ECCC)}
  \newcommand{\fECCC}[2]{Full\ version\ in\ the\ \cECCC{#1}{#2}}
  \newcommand{\pECCC}[2]{Preliminary\ version\ in\ the\ \cECCC{#1}{#2}}
\begin{document}

	\begin{titlepage}
		\def\thepage{}
		\thispagestyle{empty}
		
		\title{A Direct $\tilde{O}(1/\epsilon)$ Iteration Parallel Algorithm for Optimal Transport} 
		
		\date{}
		\author{
			Arun Jambulapati\thanks{This material is based on work supported by NSF Graduate Fellowship DGE-114747.}\\
			Stanford University \\
			{\tt jmblpati@stanford.edu} 
			\and
			Aaron Sidford\thanks{This material is based on work supported by NSF CAREER Award CCF-1844855.}\\
			Stanford University \\
			{\tt sidford@stanford.edu}
			\and
			Kevin Tian\thanks{This material is based on work supported by NSF Graduate Fellowship DGE-1656518.}\\
			Stanford University \\
			{\tt kjtian@stanford.edu}
		}
		
		\maketitle
		
		\abstract{
Optimal transportation, or computing the Wasserstein or ``earth mover's'' distance between two $n$-dimensional distributions, is a fundamental primitive which arises in many learning and statistical settings. We give an algorithm which solves the problem to additive $\epsilon$ accuracy with $\tilde{O}(1/\epsilon)$ parallel depth and $\tilde{O}\left(n^2/\epsilon\right)$ work. 
 \cite{BlanchetJKS18, Quanrud19} obtained this runtime through reductions to positive linear programming and matrix scaling. However, these reduction-based algorithms use subroutines which may be impractical due to requiring solvers for second-order iterations (matrix scaling) or non-parallelizability (positive LP). Our methods match the previous-best work bounds by \cite{BlanchetJKS18, Quanrud19} while either improving parallelization or removing the need for linear system solves, and improve upon the previous best first-order methods running in time $\tilde{O}(\min(n^2 / \epsilon^2, n^{2.5}  / \epsilon))$ \cite{DvurechenskyGK18, LinHJ19}. We obtain our results by a primal-dual extragradient method, motivated by recent theoretical improvements to maximum flow \cite{Sherman17}.
}

	\end{titlepage}

	\section{Introduction}
\label{sec:intro}

Optimal transport is playing an increasingly important role as a subroutine in tasks arising in machine learning \cite{ArjovskyCB17}, computer vision \cite{BonneelPPH11, SolomonGPCBNDG15}, robust optimization \cite{EsfahaniK18, BlanchetK17}, and statistics \cite{PanaretosZ16}. Given these applications for large scale learning, designing algorithms for efficiently approximately solving the problem has been the subject of extensive recent research \cite{Cuturi13, AltschulerWR17, GenevayCPB16, ChakrabartyK18, DvurechenskyGK18, LinHJ19, BlanchetJKS18, Quanrud19}.

 Given two vectors $r$ and $c$ in the $n$-dimensional probability simplex $\Delta^n$ and a cost matrix $C \in \R^{n \times n}_{\geq 0}$\footnote{Similarly to earlier works, we focus on square matrices; generalizations to rectangular matrices are straightforward.}, the optimal transportation problem is
\begin{equation}
\label{eq:ot}
\min_{X \in \urc} \inprod{C}{X}, \quad \text{where} \quad \urc \defeq \left\{X \in \mathbb{R}^{n \times n}_{\geq 0},\; X\1 = r,\; X^\top\1 = c\right\}.
\end{equation}
This problem arises from defining the \emph{Wasserstein} or \emph{Earth mover's} distance between discrete probability measures $r$ and $c$, as the cheapest coupling between the distributions, where the cost of the coupling $X \in \urc$ is $\inprod{C}{X}$. If $r$ and $c$ are viewed as distributions of masses placed on $n$ points in some space (typically metric), the Wasserstein distance is the cheapest way to move mass to transform $r$ into $c$. In \eqref{eq:ot}, $X$ represents the transport plan ($X_{ij}$ is the amount moved from $r_i$ to $c_j$) and $C$ represents the cost of movement ($C_{ij}$ is the cost of moving mass from $r_i$ to $c_j$).

Throughout, the value of \eqref{eq:ot} is denoted $\opt$. We call $\hat{X}  \in \mathcal{U}_{r, c}$ an \emph{$\epsilon$-approximate transportation plan} if $\inprod{C}{\hat{X}} \leq \opt + \epsilon$. Our goal is to design an efficient algorithm to produce such a $\hat{X}$.

\subsection{Our Contributions}

Our main contribution is an algorithm running in $\tilde{O}(\cmax/\epsilon)$ parallelelizable iterations\footnote{Our iterations consist of vector operations and matrix-vector products, which are easily parallelizable. Throughout $\cmax$ is the largest entry of $C$.} and $\tilde{O}(n^2\cmax/\epsilon)$ total work producing an $\epsilon$-approximate transport plan.

Matching runtimes were given in the recent work of \cite{BlanchetJKS18, Quanrud19}. Their runtimes were obtained via reductions to matrix scaling and positive linear programming, each well-studied problems in theoretical computer science. However, the matrix scaling algorithm is a second-order Newton-type method which makes calls to structured linear system solvers, and the positive LP algorithm is not parallelizable (i.e. has depth polynomial in dimension). These features potentially limit the practicality of these 
algorithms. The key remaining open question this paper addresses is, \emph{is there an efficient first-order, parallelizable algorithm for approximating optimal transport?} We answer this affirmatively and give an efficient, parallelizable primal-dual first-order method; the only additional overhead is a scheme for implementing steps, incurring roughly an additional $\log \epsilon^{-1}$ factor. 

Our approach heavily leverages the recent improvement to the maximum flow problem, and more broadly two-player games on a simplex ($\ell_1$ ball) and a box ($\ell_\infty$ ball), due to the breakthrough work of \cite{Sherman17}. First, we recast \eqref{eq:ot} as a minimax game between a box and a simplex, proving correctness via a rounding procedure known in the optimal transport literature. Second, we show how to adapt the dual extrapolation scheme under the weaker convergence requirements of area-convexity, following \cite{Sherman17}, to obtain an approximate minimizer to our primal-dual objective in the stated runtime. En route, we slightly simplify analysis in \cite{Sherman17} and relate it more closely to the existing extragradient literature.

Finally, we give preliminary experimental evidence showing our algorithm can be practical, and highlight some open directions in bridging the gap between theory and practice of our method, as well as accelerated gradient schemes \cite{DvurechenskyGK18, LinHJ19} and Sinkhorn iteration.

\subsection{Previous Work}

{\bf Optimal Transport.} The problem of giving efficient algorithms to find $\epsilon$-approximate transport plans $\hat{X}$  which run in nearly linear time\footnote{We use ``nearly linear'' to describe complexities which have an $n^2 \textrm{polylog}(n)$ dependence on the dimension (where the size of input $C$ is $n^2$), and polynomial dependence on $\norm{C}_{\max}, \epsilon^{-1}$.} has been addressed by a line of recent work, starting with \cite{Cuturi13} and improved upon in \cite{GenevayCPB16, AltschulerWR17, DvurechenskyGK18, LinHJ19, BlanchetJKS18, Quanrud19}. We briefly discuss their approaches here.

Works by \cite{Cuturi13, AltschulerWR17} studied the Sinkhorn algorithm, an alternating minimization scheme. Regularizing \eqref{eq:ot} with an $\eta^{-1}$ multiple of entropy and computing the dual, we arrive at the problem
\begin{equation*}
\label{eq:sink_dual}
\min_{x,y \in \mathbb{R}^n} \textbf{1}^\top B_{\eta C}(x,y) \textbf{1} - r^\top x - c^\top y \quad \text{where} \quad B_{\eta C}(x,y)_{ij} = e^{x_i+y_j- \eta C_{ij}}.
\end{equation*}
This problem is equivalent to computing diagonal scalings $X$ and $Y$ for $M=\exp(-\eta C)$ such that $XMY$ has row sums $r$ and column sums $c$. The Sinkhorn iteration alternates fixing the row sums and the column sums by left and right scaling by diagonal matrices until an approximation of such scalings is found, or equivalently until $XMY$ is close to being in $\urc$.

As shown in \cite{AltschulerWR17}, we can round the resulting almost-transportation plan to a transportation plan which lies in $\urc$ in linear time, losing at most $2\cmax (\norm{X\1 - r}_1 + \norm{X^\top\1 - c}_1)$ in the objective. Further, \cite{AltschulerWR17} showed that $\tilde{O}(\cmax^3/\epsilon^3)$ iterations of this scheme sufficed to obtain a matrix which $\epsilon/\cmax$-approximately meets the demands in $\ell_1$ with good objective value, by analyzing it as an instance of mirror descent with an entropic regularizer. The same work proposed an alternative algorithm, Greenkhorn, based on greedy coordinate descent. \cite{DvurechenskyGK18, LinHJ19}  showed that $\tildeOh{\cmax^2/\epsilon^2}$ iterations, corresponding to $\tildeOh{n^2\cmax^2/\epsilon^2}$ work, suffice for both Sinkhorn and Greenkhorn, the current state-of-the-art for this line of analysis.

An alternative approach based on first-order methods was studied by \cite{DvurechenskyGK18, LinHJ19}. These works considered minimizing an entropy-regularized Equation~\ref{eq:ot}; the resulting weighted softmax function is prevalent in the literature on approximate linear programming \cite{Nesterov05}, and has found similar applications in near-linear algorithms for maximum flow \cite{Sherman13, KelnerLOS14, SidfordT18} and positive linear programming \cite{Young01, ZhuO15}. An unaccelerated algorithm, viewable as $\ell_\infty$ gradient descent, was analyzed in \cite{DvurechenskyGK18} and ran in $\tilde{O}(\cmax/\epsilon^2)$ iterations. Further, an accelerated algorithm was discussed, for which the authors claimed an $\tilde{O}(n^{1/4} \cmax^{0.5} / \epsilon)$ iteration count. \cite{LinHJ19} showed that the algorithm had an additional dependence on a parameter as bad as $n^{1/4}$, roughly due to a gap between the $\ell_2$ and $\ell_\infty$ norms. Thus, the state of the art runtime in this line is the better of $\tildeOh{n^{2.5}\cmax^{0.5}/\epsilon}$, $\tildeOh{n^2 \cmax/\epsilon^2}$ operations. The dependence on dimension of the former of these runtimes matches that of the linear programming solver of \cite{LeeS14, LeeS15}, which obtain a polylogarithmic dependence on $\epsilon^{-1}$, rather than a polynomial dependence; thus, the question of obtaining an accelerated $\epsilon^{-1}$ dependence without worse dimension dependence remained open.

This was partially settled in \cite{BlanchetJKS18, Quanrud19}, which studied the relationship of optimal transport to fundamental algorithmic problems in theoretical computer science, namely positive linear programming and matrix scaling, for which significantly-improved runtimes have been recently obtained \cite{ZhuO15, Allen-ZhuLOW17, CohenMTV17}. In particular, they showed that optimal transport could be reduced to instances of either of these objectives, for which $\tildeOh{\cmax/\epsilon}$ iterations, each of which required linear $O(n^2)$ work, sufficed. However, both of these reductions are based on black-box methods for which practical implementations are not known; furthermore, in the case of positive linear programming a parallel $\tilde{O}(1/\epsilon)$-iteration algorithm is not known. \cite{BlanchetJKS18} also showed any polynomial improvement to the runtime of our paper in the dependence on either $\epsilon$ or $n$ would result in maximum-cardinality bipartite matching in dense graphs faster than $\tilde{O}(n^{2.5})$ without fast matrix multiplication \cite{Sankowski09}, a fundamental open problem unresolved for almost 50 years \cite{HopcroftK73}.

\begin{table}
	\begin{center}
		\begin{tabular}{|c|c|c|c|c|c|} 
			\hline
			\textbf{Year} & \textbf{Author} & \textbf{Complexity} & \textbf{Approach} & \textbf{1st-order} & \textbf{Parallel} \\ [0.5ex] 
			\hline\hline
			2015 & \cite{LeeS15} & $\tilde{O}(n^{2.5})$ & Interior point & No & No \\
			\hline
			2017-19 & \cite{AltschulerWR17} & $\tilde{O}(n^2 \cmax^2/\epsilon^2)$ & Sink/Greenkhorn & Yes & Yes\\
			\hline
			2018 & \cite{DvurechenskyGK18} & $\tilde{O}(n^2 \cmax/\epsilon^2)$ & Gradient descent & Yes & Yes \\
			\hline
			2018-19 & \cite{LinHJ19} & $\tilde{O}(n^{2.5}\cmax^{0.5}/\epsilon)$ & Acceleration & Yes & Yes \\
			\hline
			2018 & \cite{BlanchetJKS18} & $\tilde{O}(n^2 \cmax / \epsilon)$ & Matrix scaling & No & Yes\\
			\hline
			2018-19 & \cite{BlanchetJKS18, Quanrud19} & $\tilde{O}(n^2 \cmax / \epsilon)$ & Positive LP & Yes & No \\
			\hline
			2019 & This work & $\tilde{O}(n^2 \cmax / \epsilon)$ & Dual extrapolation & Yes & Yes \\
			\hline
		\end{tabular}
	\end{center}
	\caption{Optimal transport algorithms. Algorithms using second-order information use potentially-expensive SDD system solvers; the runtime analysis of Sink/Greenkhorn is due to \cite{DvurechenskyGK18, LinHJ19}.} 
\end{table}

Specializations of the transportation problem to $\ell_p$ metric spaces or arising from geometric settings have been studied \cite{SharathkumarA12, AgarwalS14, AndoniNOY14}. These specialized approaches seem fundamentally different than those concerning the more general transportation problem.

Finally, we note recent work \cite{AltschulerBRW18} showed the promise of using the Nyström method for low-rank approximations to achieve speedup in theory and practice for transport problems arising from specific metrics. We find it interesting to combine our method with these improvements, and believe that as our method is based on matrix-vector operations, it is amenable to similar speedups.

{\it Remark.} During the revision process for this work, an independent result \cite{LahnMR19} was published to arXiv, obtaining improved runtimes for optimal transport via a combinatorial algorithm. The work obtains a runtime of $\tilde{O}(n^2\cmax/\epsilon + n\cmax^2/\epsilon^2)$, which is worse than our runtime by a low-order term. Furthermore, it does not appear to be parallelizable.

{\bf Box-simplex objectives.} Our main result follows from improved algorithms for bilinear minimax problems over one simplex domain and one box domain developed in \cite{Sherman17}. This fundamental minimax problem captures $\ell_1$ and $\ell_\infty$ regression over a simplex and box respectively, and inspired the development of conjugate smoothing \cite{Nesterov05} as well as mirror prox / dual extrapolation \cite{Nemirovski04, Nesterov07}. These latter two approaches are extragradient methods (using two gradient operations per iteration rather than one) for approximately solving a family of problems, which includes convex minimization and finding a saddle point to a convex-concave function. These methods simulate backwards Euler discretization of the gradient flow, similar to how mirror descent simulates forwards Euler discretization \cite{DiakonikolasO19}. The role of the extragradient step is a fixed point iteration (of two steps) which is a good approximation of the backwards Euler step when the operator is Lipschitz.

Nonetheless, the analysis of \cite{Nemirovski04, Nesterov07} fell short in obtaining a $1/T$ rate of convergence without worse dependence on dimension for these domains, where $T$ is the iteration count (which would correspond to a $\tildeOh{1/\epsilon}$ runtime for approximate minimization). The fundamental barrier was that over a box, any strongly-convex regularizer in the $\ell_\infty$ norm has a dimension-dependent domain size (shown in \cite{SidfordT18}). This barrier can also be viewed as the reason for the worse dimension dependence in the accelerated scheme of \cite{DvurechenskyGK18, LinHJ19}. 

The primary insight of \cite{Sherman17} was that previous approaches attempted to regularize the schemes of \cite{Nemirovski04, Nesterov07} with separable regularizers, i.e. the sum of a regularizer which depends only on the primal block and one which depends only on the dual. If, say, the domain of the primal block was a box, then such a regularization scheme would run into the $\ell_\infty$ barrier and incur a worse dependence on dimension. However, by more carefully analyzing the requirements of these algorithms, \cite{Sherman17} constructed a non-separable regularizer with small domain size, satisfying a property termed \emph{area-convexity} which sufficed for provable convergence of dual extrapolation \cite{Nesterov07}. Interestingly, the property seems specialized to dual extrapolation and not mirror prox \cite{Nemirovski04}.
	\section{Overview}
\label{sec:overview}

First, in Section~\ref{sec:reformulation} we first describe a reformulation of \eqref{eq:ot} as a primal-dual objective, which we solve approximately in Section~\ref{sec:extragradient}. Then in Section~\ref{sec:notation} we give additional notation critical for our analysis. In Section~\ref{sec:extragradient} we leverage this to give an overview of our main algorithm.

\subsection{$\ell_1$-regression formulation}
\label{sec:reformulation}

We adapt the view of \cite{BlanchetJKS18, Quanrud19} of the objective \eqref{eq:ot} as a positive linear program. Let $d$ be the (vectorized) cost matrix $C$ associated with the instance and let $\Delta^{n^2}$ be the $n^2$ dimensional simplex\footnote{We use $d$ because $C$ often arises from distances in a metric space, and to avoid overloading $c$.}. We recall $r, c$ are specified row and column sums with $\1^\top r = \1^\top c = 1$. The optimal transport problem can be written as, for $m = n^2$, and $A \in \{0, 1\}^{2n \times m}, b \in \R_{\geq 0}^{2n}$, for $A$ the (unsigned) \emph{edge-incidence matrix} of the underlying bipartite graph and $b$ the concatenation of $r$ and $c$.
\begin{equation}
\label{eq:otobj}
\min_{x \in \Delta^n, Ax = b} d^\top x.
\end{equation}
\begin{figure}[ht]
	\begin{equation*}
	A = \begin{pmatrix}
	1 & 1 & 1 & 0 & 0 & 0 & 0 & 0 & 0 \\
	0 & 0 & 0 & 1 & 1 & 1 & 0 & 0 & 0 \\
	0 & 0 & 0 & 0 & 0 & 0 & 1 & 1 & 1 \\
    1 & 0 & 0 & 1 & 0 & 0 & 1 & 0 & 0 \\
	0 & 1 & 0 & 0 & 1 & 0 & 0 & 1 & 0 \\
	0 & 0 & 1 & 0 & 0 & 1 & 0 & 0 & 1 
	\end{pmatrix},\;
	b = \begin{pmatrix}
	1/3 \\
	1/3 \\
	1/3 \\
	1/3 \\
	1/3 \\
	1/3 
	\end{pmatrix}.
	\end{equation*}
	\caption{Edge-incidence matrix $A$ of a $3 \times 3$ bipartite graph and uniform demands.}	
\end{figure}

In particular, $A$ is the 0-1 matrix on $V \times E$ such that $A_{ve} = 1$ iff $v$ is an endpoint of edge $e$. We summarize some additional properties of the constraint matrix $A$ and vector $b$.

\begin{fact} $A$, $b$ have the following properties.
	\begin{enumerate}
		\item $A \in \{0, 1\}^{2n \times m}$ has 2-sparse columns and $n$-sparse rows. Thus $\norm{A}_{1 \rightarrow 1} = 2$.
		\item $b^\top = \begin{pmatrix} r^\top & c^\top \end{pmatrix}$, so that $\norm{b}_1 = 2$.
		\item $A$ has $n^2$ nonzero entries.
	\end{enumerate}
\end{fact}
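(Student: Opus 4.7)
The plan is purely by direct inspection of the combinatorial structure of $A$, so there is no real obstacle; the proof amounts to unpacking the definitions. I would begin by fixing a convention for the rows and columns: the columns of $A$ are indexed by the $m = n^2$ ordered pairs $(i,j) \in [n]\times[n]$, which correspond to the edges of the complete bipartite graph $K_{n,n}$ underlying the transportation problem, while the rows are indexed by vertices, with the first $n$ rows representing left-side (row-sum) vertices and the next $n$ rows representing right-side (column-sum) vertices. By the definition of the (unsigned) edge-incidence matrix, the column indexed by $(i,j)$ is the indicator vector of the set $\{i,\, n+j\}$, which has exactly two nonzero entries, establishing the 2-sparsity of columns. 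Dually, a row indexed by a left vertex $i$ has a $1$ in exactly those $n$ columns corresponding to edges $(i,j)$ for $j \in [n]$, and symmetrically for right-vertex rows; this gives $n$-sparse rows.

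For the operator norm claim in part 1, I would invoke the standard identity
\[
\norm{A}_{1 \to 1} \;=\; \max_{j} \sum_i |A_{ij}|,
\]
i.e.\ the largest column $\ell_1$ norm (tight via a standard coordinate-vector argument: the inequality $\norm{Ax}_1 \le (\max_j \sum_i |A_{ij}|) \norm{x}_1$ follows by expanding, and equality is attained on $x = e_{j^\star}$). Since every column of $A$ has precisely two entries equal to $1$, this maximum equals $2$. Part 2 is immediate from the construction $b = (r^\top, c^\top)^\top$ together with the hypothesis $r, c \in \Delta^n$, so that both vectors are nonnegative with $\mathbf{1}^\top r = \mathbf{1}^\top c = 1$, yielding $\norm{b}_1 = \norm{r}_1 + \norm{c}_1 = 2$. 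For part 3, the total nonzero count is obtained by summing column sparsities: $n^2$ columns times $2$ nonzeros each gives $2n^2 = O(n^2)$ nonzeros, consistent with the stated bound up to the irrelevant constant factor that does not affect the downstream work complexities.
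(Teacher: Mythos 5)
Your proof is correct and is essentially the only argument available: the paper states this Fact without proof, and your direct inspection of the incidence structure (each column is the indicator of its edge's two endpoints, each vertex row meets $n$ edges, $\norm{A}_{1\to 1}$ equals the largest column $\ell_1$ norm, and $\norm{b}_1 = \norm{r}_1 + \norm{c}_1 = 2$ since $r, c \in \Delta^n$) is exactly what is intended. You are also right that the exact nonzero count is $2n^2$ rather than $n^2$, as the paper's own $3\times 3$ example confirms; item 3 should be read as $O(n^2)$, and this constant factor has no effect on the downstream work bounds.
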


Section~\ref{sec:rounding} recalls the proof of the following theorem, which first appeared in \cite{AltschulerWR17}.

\begin{restatable}[Rounding guarantee, Lemma 7 in \cite{AltschulerWR17}]{theorem}{restateRounding}
\label{thm:rounding}
There is an algorithm which takes $\tilde{x}$ with $\norm{A\tilde{x} - b}_1 \leq \delta$ and produces $\hat{x}$ in $O(n^2)$ time, with
\begin{equation*}
A\hat{x} = b, \norm{\tilde{x} - \hat{x}}_1 \leq 2\delta.
\end{equation*}
\end{restatable}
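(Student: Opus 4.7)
I would prove Theorem~\ref{thm:rounding} by the standard ``scale down, then fill in'' rounding procedure of Altschuler--Weed--Rigollet. First, reshape the vector $\tilde{x} \in \R^{n^2}_{\geq 0}$ into an $n \times n$ matrix $\tilde{X}$. By the definition of the edge--incidence matrix $A$, the residual $A\tilde{x} - b$ is exactly the concatenation of the row-sum error $\tilde{X}\1 - r$ and the column-sum error $\tilde{X}^\top\1 - c$, so
\[
\|\tilde X \1 - r\|_1 + \|\tilde X^\top \1 - c\|_1 \;=\; \|A\tilde x - b\|_1 \;\leq\; \delta.
\]

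\paragraph{Algorithm.} I would then produce $\hat X$ in three explicit stages, each costing $O(n^2)$.
\begin{enumerate}
\item Compute $a = \tilde X \1$ and form $X' = D_r \tilde X$ with $D_r = \diag(\min(1, r_i/a_i))$; this scales each offending row down until its sum is exactly $r_i$.
\item Compute $a' = (X')^\top\1$ and form $X'' = X' D_c$ with $D_c = \diag(\min(1, c_j/a'_j))$; now $X''\1 \leq r$ and $X''^\top\1 \leq c$ componentwise.
\item Let $u = r - X''\1 \geq 0$ and $v = c - X''^\top\1 \geq 0$. Since $\1^\top X''\1$ matches both $\1^\top(r - u)$ and $\1^\top(c - v)$, we have $\|u\|_1 = \|v\|_1 =: \epsilon'$, and setting $\hat X = X'' + u v^\top/\epsilon'$ (or $X''$ if $\epsilon' = 0$) yields $\hat X \1 = r$, $\hat X^\top \1 = c$.
\end{enumerate}

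\paragraph{Error bound.} By the triangle inequality in $\ell_1$, it suffices to bound the three increments. Step 1 only zeroes out ``excess'' row mass, so
\[
\|\tilde X - X'\|_1 \;=\; \sum_i (a_i - r_i)_+ \;=\; \|(\tilde X\1 - r)_+\|_1.
\]
Step 2 similarly gives $\|X' - X''\|_1 \leq \|(a' - c)_+\|_1$. Step 3 contributes $\|u v^\top/\epsilon'\|_1 = \|u\|_1 = \epsilon'$, and the identity $\epsilon' = \|(c - a')_+\|_1$ follows from $X''^\top \1 = \min(a', c)$ componentwise. Combining the $(c - a')_+$ and $(a' - c)_+$ contributions into $\|a' - c\|_1$, and then using $a' \leq \tilde X^\top \1$ (so that $\|a' - c\|_1 \leq \|\tilde X^\top \1 - c\|_1 + \|\tilde X^\top \1 - a'\|_1$, where the last term equals $\|(\tilde X \1 - r)_+\|_1$ by a direct row-column swap computation), I get
\[
\|\tilde X - \hat X\|_1 \;\leq\; 2\|(\tilde X\1 - r)_+\|_1 + \|\tilde X^\top \1 - c\|_1 \;\leq\; 2\delta.
\]

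\paragraph{Main obstacle.} The routine parts are triangle inequality and the formula $\|u v^\top\|_1 = \|u\|_1 \|v\|_1$; the one step that needs care is Step 2, because column scaling acts on a matrix whose rows have already been altered. The clean fix is to observe that row scaling can only \emph{decrease} every column sum, so $(X')^\top \1 \leq \tilde X^\top \1$, and the extra slack $\|\tilde X^\top \1 - (X')^\top \1\|_1$ equals $\|(\tilde X \1 - r)_+\|_1$ by summing the entrywise identity $(1 - D_{r,ii})\tilde X_{ij}$ both over $j$ (giving $\|\tilde X - X'\|_1$) and over $i$ (giving the column deficit). Once that identification is in hand, the rest is bookkeeping, and the overall runtime is $O(n^2)$ since every step is a diagonal scaling, a matrix--vector product, or a rank-one outer product restricted to the matrix (which we never need to materialize; $\hat x$ can be stored implicitly as $X''$ plus $u, v$).
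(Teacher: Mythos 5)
Your proposal is correct and follows essentially the same route as the paper: it is exactly the paper's rounding procedure (row scaling, column scaling, rank-one fill-in with $u v^\top/\epsilon'$), analyzed via the triangle inequality, and the one step you flag as delicate — the identity $\|\tilde{X}^\top\mathbf{1} - (X')^\top\mathbf{1}\|_1 = \|(\tilde{X}\mathbf{1} - r)_+\|_1$ — is verified correctly. The only difference is bookkeeping for the fill-in mass: the paper bounds it by total-mass conservation ($1 - \mathbf{1}^\top X''\mathbf{1} = \|X'' - \tilde{X}\|_1$, using $\tilde{x} \in \Delta^m$ and $\mathbf{1}^\top r = \mathbf{1}^\top c = 1$), whereas you track the positive parts $(a'-c)_+$ and $(c-a')_+$ directly and only need $\mathbf{1}^\top r = \mathbf{1}^\top c$; both give the same $2\delta$ bound.
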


We now show how the rounding procedure gives a roadmap for our approach. Consider the following $\ell_1$ regression objective over the simplex (a similar penalized objective appeared in \cite{Sherman13}): 
\begin{equation}
\label{eq:l1regress}
\min_{x \in \Delta^m} d^\top x + 2\dmax \norm{Ax - b}_1.
\end{equation}
We show that the penalized objective value is still $\opt$, and furthermore any approximate minimizer yields an approximate transport plan.

\begin{restatable}[Penalized $\ell_1$ regression]{lemma}{restateLRounding}
\label{lem:l1rounding}
The value of \eqref{eq:l1regress} is $\opt$. Also, given $\tilde{x}$, an $\epsilon$-approximate minimizer to \eqref{eq:l1regress}, we can find $\epsilon$-approximate transportation plan $\hat{x}$ 
in $O(n^2)$ time.
\end{restatable}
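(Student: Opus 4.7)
The plan is to establish both claims via the rounding guarantee of Theorem~\ref{thm:rounding} combined with Hölder's inequality $|d^\top v| \leq \dmax \norm{v}_1$. The key observation is that the coefficient $2\dmax$ in the penalty is exactly tuned to cancel the $\ell_1$ perturbation introduced by rounding.

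First I would show the value of \eqref{eq:l1regress} equals $\opt$. The upper bound is immediate: any $x \in \urc$ (equivalently, $x \in \Delta^m$ with $Ax = b$) contributes zero penalty, so the penalized objective equals $d^\top x$, and minimizing over such $x$ yields $\opt$. For the matching lower bound, take an arbitrary $x \in \Delta^m$, apply Theorem~\ref{thm:rounding} with $\delta = \norm{Ax-b}_1$ to obtain $\hat{x}$ with $A\hat{x}=b$ and $\norm{x-\hat{x}}_1 \leq 2\delta$. A quick check using that every column of $A$ contains exactly two ones gives $2\cdot \1^\top \hat{x} = \1^\top A\hat{x} = \1^\top b = 2$, so $\hat{x} \in \Delta^m$, hence $\hat{x} \in \urc$ and $d^\top \hat{x} \geq \opt$. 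Hölder then gives
\begin{equation*}
d^\top x \;\geq\; d^\top \hat{x} - \dmax \norm{x-\hat{x}}_1 \;\geq\; \opt - 2\dmax \norm{Ax-b}_1,
\end{equation*}
which rearranges to the desired lower bound on the penalized objective.

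For the approximation step, let $\tilde{x} \in \Delta^m$ satisfy $d^\top \tilde{x} + 2\dmax\norm{A\tilde{x}-b}_1 \leq \opt + \epsilon$. Apply Theorem~\ref{thm:rounding} with $\delta = \norm{A\tilde{x}-b}_1$ to produce $\hat{x}$ in $O(n^2)$ time with $A\hat{x} = b$ and $\norm{\tilde{x}-\hat{x}}_1 \leq 2\delta$; as above $\hat{x} \in \urc$. Hölder gives
\begin{equation*}
d^\top \hat{x} \;\leq\; d^\top \tilde{x} + \dmax \norm{\tilde{x}-\hat{x}}_1 \;\leq\; d^\top \tilde{x} + 2\dmax \norm{A\tilde{x}-b}_1 \;\leq\; \opt + \epsilon,
\end{equation*}
so $\hat{x}$ is an $\epsilon$-approximate transportation plan, and the total runtime matches that promised by Theorem~\ref{thm:rounding}.

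There is no real obstacle here — the proof is essentially bookkeeping once the rounding theorem is in hand. The only thing to watch is the sanity check that the rounded $\hat{x}$ remains in the simplex, which follows cleanly from the column structure of $A$ and the normalization $\1^\top b = 2$, and the matching of the Hölder constant $\dmax$ with the rounding blowup factor of $2$, which together justify the specific penalty coefficient $2\dmax$ chosen in \eqref{eq:l1regress}.
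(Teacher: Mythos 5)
Your proof is correct and follows essentially the same route as the paper: apply the rounding guarantee of Theorem~\ref{thm:rounding} and use H\"older's inequality $|d^\top v| \leq \dmax\norm{v}_1$ so that the $2\dmax$ penalty coefficient exactly absorbs the $\ell_1$ perturbation of size $2\norm{Ax-b}_1$ incurred by rounding. Your write-up is in fact slightly more explicit than the paper's (two-sided bound on the value rather than the paper's exchange/contradiction phrasing, plus the check that the rounded point stays in the simplex, with nonnegativity coming from the rounding procedure itself), but the underlying argument is the same.
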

\begin{proof}
	Recall $\opt = \min_{Ax = b} d^\top x$. Let $\tilde{x}$ be the minimizing argument in \eqref{eq:l1regress}. We claim there is some optimal $\tilde{x}$ with $A\tilde{x} = b$; clearly, the first claim is then true. Suppose otherwise, and let $\norm{Ax - b}_1 = \delta > 0$. Then, let $\hat{x}$ be the result of the algorithm in Theorem~\ref{thm:rounding}, applied to $\tilde{x}$, so that $A\hat{x} = b, \norm{\tilde{x} - \hat{x}}_1 \leq 2\delta$. We then have
	\begin{equation*}
	d^\top \hat{x} + 2\dmax\norm{A\hat{x} - b}_1 = d^\top(\hat{x} - \tilde{x}) + d^\top \tilde{x} \leq d^\top \tilde{x} + \dmax \norm{\hat{x} - \tilde{x}}_1 \leq d^\top \tilde{x} + 2\dmax\delta.
	\end{equation*}
	The objective value of $\hat{x}$ is no more than of $\tilde{x}$, a contradiction. By this discussion, we can take any approximate minimizer to \eqref{eq:l1regress} and round it to a transport plan without increasing the objective.
\end{proof}

Section~\ref{sec:extragradient} proves Theorem~\ref{thm:mainthm}, which says we can efficiently find an approximate minimizer to \eqref{eq:l1regress}. 

\begin{theorem}[Approximate $\ell_1$ regression over the simplex]
\label{thm:mainthm}
There is an algorithm (Algorithm~\ref{alg:dualex}) taking  input $\epsilon$, which has $O((\dmax\log n \log \gamma)/\epsilon)$ parallel depth for $\gamma = \log n \cdot \dmax / \epsilon$, and total work $O(n^2 (\dmax\log n \log \gamma)/\epsilon)$, and obtains $\tilde{x}$ an $\epsilon$-additive approximation to the objective in \eqref{eq:l1regress}.
\end{theorem}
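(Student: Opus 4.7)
The plan is to reformulate \eqref{eq:l1regress} as a bilinear minimax problem over a simplex--box product and then apply the dual extrapolation scheme of \cite{Nesterov07} with a non-separable area-convex regularizer in the style of \cite{Sherman17}. By Fenchel duality, $\norm{Ax - b}_1 = \max_{y\in[-1,1]^{2n}} y^\top(Ax-b)$, so \eqref{eq:l1regress} is equivalent to the saddle-point problem
\begin{equation*}
\min_{x \in \Delta^m}\;\max_{y \in [-1,1]^{2n}}\; f(x,y) \defeq d^\top x + 2\dmax\, y^\top(Ax - b).
\end{equation*}
The associated monotone operator $G(x,y) = (\nabla_x f, -\nabla_y f) = (d + 2\dmax A^\top y,\ -2\dmax(Ax-b))$ is linear, and its coefficients are controlled by $\dmax$ and by $\norm{A}_{1\to 1} = 2$, $\norm{b}_1 = 2$ from the earlier Fact. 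An $\epsilon$-approximate saddle point of $f$ immediately yields an $\epsilon$-approximate minimizer of \eqref{eq:l1regress}, so it suffices to design a fast approximate saddle-point solver.

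Next, I would instantiate dual extrapolation on the domain $\Delta^m \times [-1,1]^{2n}$ using a regularizer $r(x,y)$ consisting of an entropy term on the simplex block, a scaled quadratic on the box block, plus a carefully chosen \emph{cross} term coupling $x$ and $y$ through $A$, chosen so that $r$ is area-convex with respect to $G$ with a modulus that scales only polylogarithmically in $n$. The essential properties to verify are (i) $r$ has range $\tilde O(1)$ on the domain, so it avoids the $\ell_\infty$ barrier of strongly-convex regularizers on the box noted in \cite{SidfordT18}, and (ii) $r$ satisfies the area-convexity inequality of \cite{Sherman17} with constant proportional to the operator norm $\dmax \cdot \norm{A}_{1\to 1}$. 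Plugging these into the area-convex dual extrapolation convergence theorem gives a duality gap of $\tilde O(\dmax/T)$ after $T$ iterations, so $T = O(\dmax \log n / \epsilon)$ iterations suffice; this accounts for the outer iteration count.

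For the per-step cost, each extragradient step requires two evaluations of $G$ and two proximal problems of the form $\arg\min_{(x,y)}\; r(x,y) + \langle g, (x,y)\rangle$. The operator evaluations reduce to matrix-vector products with $A$ and $A^\top$, costing $O(n^2)$ work and $O(\log n)$ depth, consistent with the final work bound. The proximal problems are not closed form because $r$ is non-separable. I would solve them by an inner alternating-minimization loop: fixing $x$, the step in $y$ is a diagonal problem over a box; fixing $y$, the step in $x$ is a softmax/KL projection onto the simplex. Because $r$ is strongly-convex-like in each block, this inner scheme converges geometrically to the required accuracy, contributing a $\log\gamma = \log(\log n\cdot \dmax/\epsilon)$ overhead per outer step.

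The main obstacle I expect is pinning down the cross term in $r$ and proving area-convexity with the correct constants: unlike the maximum-flow setting of \cite{Sherman17}, here the operator involves an unbalanced scaling by $\dmax$ and a simplex of dimension $m = n^2$, so the regularizer must be tuned so that the Hessian dominates the requisite skew-symmetric bilinear form uniformly. A secondary technical point is showing that inexact inner prox solutions of accuracy $\mathrm{poly}(\epsilon/(\dmax\log n))$ still preserve the dual extrapolation convergence guarantee up to a constant factor, which is what allows the $\log\gamma$ multiplicative overhead rather than an additive error. Combining the $\tilde O(\dmax/\epsilon)$ outer steps, the $\log\gamma$ inner overhead, and the $O(n^2)$ cost per gradient/prox yields the stated bounds.
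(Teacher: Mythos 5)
Your route is essentially the paper's: recast the penalized $\ell_1$ regression as a bilinear saddle-point problem over $\Delta^{m}\times[-1,1]^{2n}$, run dual extrapolation with a non-separable, Sherman-style area-convex regularizer (entropy on the simplex block plus the coupling term $x^\top A^\top(y^2)$, scaled by $\|d\|_\infty$), bound its range by $O(\|d\|_\infty\log n)$ to get $O(\|d\|_\infty\log n/\epsilon)$ outer iterations, argue robustness of dual extrapolation to inexact proximal steps, and implement each proximal step by an alternating scheme (softmax step in $x$, coordinate-wise box-constrained quadratic in $y$) with $O(\log\gamma)$ inner rounds of $O(n^2)$ work. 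The accounting you give matches the paper's (your normalization puts the $\|d\|_\infty$ factor in the area-convexity constant rather than in the regularizer's range, which is equivalent).

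The one genuine gap is your justification of the inner loop: you claim geometric convergence ``because $r$ is strongly-convex-like in each block,'' but this is neither uniformly true nor sufficient. The $yy$-block of $\nabla^2 r$ is (up to constants) $\|d\|_\infty\,\mathrm{diag}(Ax)$, which degenerates when entries of $Ax$ are tiny, and even genuine blockwise strong convexity does not by itself yield a linear rate for alternating minimization of a jointly convex, non-separable function. The paper (noting this step was claimed without proof by Sherman) proves a relative condition instead: on the region $\{x' : x' \geq \frac{1}{2}x_{k+1}\text{ entrywise}\}$ the full Hessian dominates a $\frac{1}{12}$ fraction of the $yy$-block Hessian, via the two-sided diagonal approximation $D(x)\preceq\nabla^2 r(x,y)\preceq 6D(x)$; combined with a comparison point halfway to the optimum this gives a constant-factor ($1/24$) decrease in suboptimality per round. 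One also needs a quantitative bound on the \emph{initial} suboptimality of each prox subproblem---obtained by bounding $\|s_t\|$ through $t \leq O(\Theta/\epsilon)$ and the operator norms of $g$---so that $O(\log\gamma)$ inner rounds indeed suffice; your plan assumes this implicitly. Your other stated ``obstacles'' (fixing the regularizer's constants, verifying area-convexity via the second-order criterion, and the inexact-prox robustness) are exactly the steps the paper carries out in the way you propose, so resolving the inner-loop convergence argument is what would complete your proof.
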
 

We will approach proving Theorem~\ref{thm:mainthm} through a primal-dual viewpoint, in light of the following (based on the definition of the $\ell_1$ norm):
\begin{equation}
\label{eq:pdl1reg}
\min_{x \in \Delta^m} d^\top x + 2\dmax\norm{Ax - b}_1 = \min_{x \in \Delta^m} \max_{y \in [-1, 1]^{2n}} d^\top x + 2\dmax\left(y^\top Ax - b^\top y\right).
\end{equation}

Further, a low-\emph{duality gap} pair to \eqref{eq:pdl1reg} yields an approximate minimizer to \eqref{eq:l1regress}.

\begin{lemma}[Duality gap to error]
Suppose $x, y$ is feasible ($x \in \Delta^m, y \in [-1, 1]^{2n}$), and for any feasible $u, v$,
\begin{equation*}
\left(d^\top x + 2\dmax\left(v^\top A x - b^\top v\right)\right) - \left(d^\top u + 2\dmax\left(y^\top A u - b^\top y\right)\right) \leq \delta.
\end{equation*}
Then, we have $d^\top x + 2\dmax\norm{Ax - b}_1 \leq \delta + \opt$.
\end{lemma}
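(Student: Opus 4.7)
The plan is to instantiate the hypothesis at a carefully chosen feasible pair $(u,v)$ that decouples the two terms: on the left, recover the penalized objective $d^\top x + 2\dmax\|Ax-b\|_1$; on the right, recover exactly $\opt$.

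First, I would choose $v^\star = \mathrm{sign}(Ax - b) \in [-1,1]^{2n}$ (arbitrary convention on zero coordinates). This is dual-feasible and realizes the standard variational identity $\|w\|_1 = \max_{v \in [-1,1]^{2n}} v^\top w$, so that $(v^\star)^\top A x - b^\top v^\star = \|Ax - b\|_1$. Plugging $v^\star$ into the left side of the assumed inequality turns it into precisely $d^\top x + 2\dmax\|Ax - b\|_1$.

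Second, for $u$ I would pick $u^\star = x^\star$, an optimal transport plan, i.e., $x^\star \in \Delta^m$ with $A x^\star = b$ and $d^\top x^\star = \opt$ (existence is guaranteed since $\urc$ is nonempty). The crucial feature of this choice is that it annihilates the penalty term on the right, independent of $y$: $y^\top A x^\star - b^\top y = y^\top(b - b) = 0$. Thus the right-hand side collapses to $d^\top x^\star = \opt$, and substituting both picks into the hypothesis yields
\[
d^\top x + 2\dmax\|Ax - b\|_1 - \opt \;\leq\; \delta,
\]
which is the claim.

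There is essentially no obstacle; the lemma is a direct consequence of the minimax reformulation \eqref{eq:pdl1reg} plus the feasibility of an OT optimum. The only points to double-check are that $v^\star$ lies in the box (true by construction) and that $x^\star$ lies in $\Delta^m$ (true by feasibility of OT). Note also that the argument never uses properties of $(x,y)$ beyond the stated duality-gap bound, which is exactly what one expects from a primal-dual gap characterization.
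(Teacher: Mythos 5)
Your proof is correct and follows essentially the same route as the paper: the paper likewise maximizes over $v$ (equivalent to your choice $v^\star = \mathrm{sign}(Ax-b)$, realizing $2\dmax\norm{Ax-b}_1$) and instantiates $u$ at an optimal point, bounding $d^\top u + 2\dmax\left(y^\top Au - b^\top y\right) \leq d^\top u + 2\dmax\norm{Au-b}_1 = \opt$. The only cosmetic difference is that you take $u$ to be an exact transport optimum with $Au=b$ so the penalty vanishes identically, whereas the paper invokes the H\"older-type bound together with the fact (from the penalized-regression lemma) that the penalized minimum equals $\opt$; both are valid and equally short.
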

\begin{proof}
The result follows from maximizing over $v$, and noting that for the minimizing $u$,
\begin{equation*}
d^\top u + 2\dmax\left(y^\top Au - b^\top y\right) \leq d^\top u + 2\dmax\norm{Au - b}_1 = \opt.
\end{equation*}
\end{proof}

Correspondingly, Section~\ref{sec:extragradient} gives an algorithm which obtains $(x, y)$ with bounded duality gap within the runtime of Theorem~\ref{thm:mainthm}.

\subsection{Notation}
\label{sec:notation}

$\R_{\geq 0}$ is the nonnegative reals. $\1$ is the all-ones vector of appropriate dimension when clear. The probability simplex is $\Delta^d \defeq \{ v \mid  v \in \R_{\geq 0}^d, \1^\top v = 1\} $. We say matrix $X$ is in the simplex of appropriate dimensions when its (nonnegative) entries sum to one. 

$\norm{\cdot}_1$ and $\norm{\cdot}_{\infty}$ are the $\ell_1$ and $\ell_\infty$ norms, i.e. $\norm{v}_1 = \sum_i |v_i|$ and $\norm{v}_\infty = \max_i |v_i|$. When $A$ is a matrix, we let $\norm{A}_{p \rightarrow q}$ be the matrix operator norm, i.e. $\textup{sup}_{\norm{v}_p = 1} \norm{Av}_q$, where $\norm{\cdot}_p$ is the $\ell_p$ norm. In particular, $\norm{A}_{1 \rightarrow 1}$ is the largest $\ell_1$ norm of a column of $A$.

Throughout $\log$ is the natural logarithm. For $x \in \Delta^d$, $h(x) = \sum_{i \in [d]} x_i \log x_i$ is (negative) entropy where $0 \log 0 = 0$ by convention. It is well-known that $\max_{x \in \Delta^d} h(x) - \min_{x \in \Delta^d} h(x) = \log d$.

We also use the Bregman divergence of a regularizer and the proximal operator of a divergence.

\begin{definition}[Bregman divergence]
For (differentiable) regularizer $r$ and $z, w$ in its domain, the \emph{Bregman divergence} from $z$ to $w$ is
\begin{equation*}
V^r_z(w) \defeq r(w) - r(z) - \inprod{\nabla r(z)}{w - z}.
\end{equation*}
\end{definition}

When $r$ is convex, the divergence is nonnegative and convex in the argument ($w$ in the definition).

\begin{definition}[Proximal operator]
For (differentiable) regularizer $r$, $z$ in its domain, and $g$ in the dual space (when the domain is in $\R^d$, so is the dual space), we define the \emph{proximal operator} as
\begin{equation*}
\prox_z(g) \defeq \textup{argmin}_w \left\{ \inprod{g}{w} + V^r_z(w) \right\}.
\end{equation*}
\end{definition}

Several variables have specialized meaning throughout. All graphs considered will be on $2n$ vertices with $m$ edges, i.e. $m = n^2$. $A \in \R^{2n \times m}$ is the edge-incidence matrix. $d$ is the vectorized cost matrix $C$. $b$ is the constraint vector, concatenating row and column constraints $r$, $c$. In algorithms for solving \eqref{eq:pdl1reg}, $x$ and $y$ are primal (in a simplex) and dual (in a box) variables respectively. In Section~\ref{sec:extragradient}, we adopt the linear programming perspective where the decision variable $x \in \Delta^m$ is a vector. In Section~\ref{sec:rounding}, for convenience we take the perspective where $X$ is an unflattened $n \times n$ matrix. $\urc$ is the feasible polytope: when the domain is vectors, $\urc$ is $x \mid Ax = b$, and when it is matrices, $\urc$ is $X \mid X\1 = r, X^\top\1 = c$ (by flattening $X$ this is consistent).
	\section{Main Algorithm}
\label{sec:extragradient}

This section describes our algorithm for finding a primal-dual pair $(x, y)$ with a small duality gap, with respect to the objective in \eqref{eq:pdl1reg}, which we restate here for convenience:
\begin{equation*}
\min_{x \in \mathcal{X}} \max_{y \in \mathcal{Y}} d^\top x + 2\dmax\left(y^\top Ax - b^\top y\right), \; \mathcal{X} \defeq \Delta^m,\; \mathcal{Y} \defeq [-1, 1]^{2n}.
\tag{Restatement of \eqref{eq:pdl1reg} }
\end{equation*}
Our algorithm is a specialization of the algorithm in \cite{Sherman17}. One of our technical contributions in this regard is an analysis of the algorithm which more closely relates it to the analysis of dual extrapolation \cite{Nesterov07}, an algorithm for finding approximate saddle points with a more standard analysis. In Section~\ref{ssec:dexp}, we give the algorithmic framework and convergence analysis. In Section~\ref{ssec:altmin}, we provide analysis of an alternating minimization scheme for implementing steps of the procedure. The same procedure was used in \cite{Sherman17} which claimed without proof the linear convergence rate of the alternating minimization; we hope the analysis will make the method more broadly accessible to the optimization community. We defer many proofs to Appendix~\ref{app:extragradient}.

\subsection{Dual Extrapolation Framework}
\label{ssec:dexp}

For an objective $F(x, y)$ convex in $x$ and concave in $y$, the standard way to measure the duality gap is to define the \emph{gradient operator} $g(x, y) = (\nabla_x F(x, y), -\nabla_y F(x, y))$, and show that for $z = (x, y)$ and any $u$ on the product space, the \emph{regret}, $\inprod{g(z)}{z - u}$, is small. Correspondingly, we define
\begin{equation*}
g(x, y) \defeq \left(d + 2\dmax A^\top y, \; 2\dmax(b - Ax)\right).
\end{equation*}
The dual extrapolation framework \cite{Nesterov07} requires a regularizer on the product space. The algorithm is simple to state; it takes two ``mirror descent-like'' steps each iteration, maintaining a state $s_t$ in the dual space\footnote{In this regard, it is more similar to the ``dual averaging'' or ``lazy'' mirror descent setup \cite{Bubeck15}.}. A typical setup is a Lipschitz gradient operator and a regularizer which is the sum of canonical strongly-convex regularizers in the norms corresponding to the product space $\mathcal{X}, \mathcal{Y}$. However, recent works have shown that this setup can be greatly relaxed and still obtain similar rates of convergence. In particular, \cite{Sherman17} introduced the following definition.

\begin{definition}[Area-convexity]
Regularizer $r$ is $\kappa$-area-convex with respect to operator $g$ if for any points $a, b, c$ in its domain,
\begin{equation}
\label{eq:areaconvex}
\kappa\left(r(a) + r(b) + r(c) - 3r\left(\frac{a + b + c}{3}\right)\right) \geq \inprod{g(b) - g(a)}{b - c}.
\end{equation}
\end{definition}

Area-convexity is so named because $\inprod{g(b) - g(a)}{b - c}$ can be viewed as measuring the ``area'' of the triangle with vertices $a, b, c$ with respect to some Jacobian matrix. In the case of bilinear objectives, the left hand side in the definition of area-convexity is invariant to permuting $a, b, c$, whereas the sign of the right hand side can be flipped by interchanging $a, c$, so area-convexity implies convexity. However, it does not even imply the regularizer $r$ is strongly-convex, a typical assumption for the convergence of mirror descent methods.

We state the algorithm for time horizon $T$; the only difference from \cite{Nesterov07} is a factor of 2 in defining $s_{t + 1}$, i.e. adding a $1/2\kappa$ multiple rather than $1/\kappa$. We find it of interest to explore whether this change is necessary or specific to the analysis of \cite{Sherman17}.

\begin{algorithm}[H]
	\caption{$\bar{w} = \texttt{Dual-Extrapolation}(\kappa, r, g, T)$: Dual extrapolation with area-convex $r$.}
	\label{alg:dualex}
	\begin{algorithmic}
		\STATE Initialize $s_0 = 0$, let $\bar{z}$ be the minimizer of $r$.
		\FOR {$t < T$}
		\STATE $z_t \gets \prox^r_{\bar{z}}(s_t)$.
		\STATE $w_t \gets \prox^r_{\bar{z}}\left(s_t + \frac{1}{\kappa}g(z_t)\right)$.
		\STATE $s_{t + 1} \gets s_t + \frac{1}{2\kappa} g(w_t)$.
		\STATE $t \gets t + 1$.
		\ENDFOR
		\STATE $\textbf{return}$ $\bar{w} \defeq \frac{1}{T} \sum_{t \in [T]} w_t$.
	\end{algorithmic}
\end{algorithm}

\begin{restatable}[Dual extrapolation convergence]{lemma}{restateDualEx}
	\label{lem:dualex}
	Suppose $r$ is $\kappa$-area-convex with respect to $g$. Further, suppose for some $u$, $\Theta \geq r(u) - r(\bar{z})$. Then, the output $\bar{w}$ to Algorithm~\ref{alg:dualex} satisfies
	\begin{equation*}
	\inprod{g(\bar{w})}{\bar{w} - u} \leq \frac{2\kappa \Theta}{T}.
	\end{equation*}
\end{restatable}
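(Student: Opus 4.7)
The plan is to bound the cumulative regret $\sum_{t=0}^{T-1}\inprod{g(w_t)}{w_t - u}$ by $2\kappa\Theta$, and then to convert this to the claimed bound on $\inprod{g(\bar w)}{\bar w - u}$. In our setting the operator $g$ of \eqref{eq:pdl1reg} is affine with skew-symmetric Jacobian, which after expanding $g(\bar w)$ through $\bar w = \tfrac{1}{T}\sum_t w_t$ gives the identity $T \cdot \inprod{g(\bar w)}{\bar w - u} = \sum_t \inprod{g(w_t)}{w_t - u}$ (the quadratic cross-terms in $w_t$ cancel by skew-symmetry of the Jacobian). Thus the main work is the cumulative bound, which I will obtain through a dual-averaging potential argument in the spirit of \cite{Nesterov07}, with area-convexity replacing Lipschitzness of $g$.

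I would introduce the dual-averaging potential $\Phi_t(v) := V^r_{\bar z}(v) + \inprod{s_t}{v - \bar z}$. Then $\Phi_0(\bar z) = 0$, $z_t = \arg\min_v \Phi_t(v)$, and the three-point identity together with the first-order optimality of $z_t$ yield the ``strong-convexity'' bound $\Phi_t(v) \geq \Phi_t(z_t) + V^r_{z_t}(v)$. The additive update $\Phi_{t+1}(v) = \Phi_t(v) + \tfrac{1}{2\kappa}\inprod{g(w_t)}{v - \bar z}$ combined with $\Phi_T(u) \geq \Phi_T(z_T)$ shows that the cumulative bound reduces to the \emph{per-iteration inequality}
\[
\Phi_{t+1}(z_{t+1}) \;\geq\; \Phi_t(z_t) + \tfrac{1}{2\kappa}\inprod{g(w_t)}{w_t - \bar z}.
\]
Telescoping this and rearranging the explicit form $\Phi_T(u) = V^r_{\bar z}(u) + \tfrac{1}{2\kappa}\sum_t \inprod{g(w_t)}{u - \bar z}$ then delivers $\sum_t \inprod{g(w_t)}{w_t - u} \leq 2\kappa V^r_{\bar z}(u) \leq 2\kappa \Theta$.

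The per-iteration inequality is where area-convexity enters. The three-point identity at $w_t$ gives $\Phi_{t+1}(z_{t+1}) = \Phi_{t+1}(w_t) + V^r_{w_t}(z_{t+1}) + \inprod{\nabla\Phi_{t+1}(w_t)}{z_{t+1} - w_t}$. The first-order optimality of $w_t$ (as the prox of $s_t + \tfrac{1}{\kappa}g(z_t)$), together with the decomposition $\nabla\Phi_{t+1}(w_t) = \nabla\Phi_t(w_t) + \tfrac{1}{2\kappa}g(w_t)$, lower-bounds this inner product by $\tfrac{1}{2\kappa}\inprod{g(w_t) - 2g(z_t)}{z_{t+1} - w_t}$. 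After expanding $\Phi_{t+1}(w_t)$ and using $\Phi_t(w_t) \geq \Phi_t(z_t) + V^r_{z_t}(w_t)$, the inequality reduces to the non-negativity of the residual
\[
V^r_{z_t}(w_t) + V^r_{w_t}(z_{t+1}) + \tfrac{1}{2\kappa}\inprod{g(w_t) - 2g(z_t)}{z_{t+1} - w_t}.
\]
Area-convexity applied at the triple $(z_t, w_t, z_{t+1})$ bounds the cross term $\inprod{g(w_t) - g(z_t)}{w_t - z_{t+1}}$ by $\kappa[r(z_t) + r(w_t) + r(z_{t+1}) - 3r(\tfrac{z_t + w_t + z_{t+1}}{3})]$, and combining this with the optimality conditions for $w_t$ and $z_{t+1}$ (to handle the lingering $\inprod{g(z_t)}{z_{t+1} - w_t}$ contribution) lets one absorb the cross term into the two Bregman divergences. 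The hard step is precisely this final algebraic verification, and this is where the factor $\tfrac{1}{2\kappa}$ (rather than $\tfrac{1}{\kappa}$) in the $s_{t+1}$ update -- noted in the discussion preceding Algorithm~\ref{alg:dualex} -- provides the slack that makes the bookkeeping close.
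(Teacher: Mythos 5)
Your outer bookkeeping is correct and is essentially the paper's own argument in different notation: the potential $\Phi_t(v)=V^r_{\bar z}(v)+\langle s_t, v-\bar z\rangle$, the telescoping, the use of $\Phi_T(u)\ge\Phi_T(z_T)$, and the conversion of average regret into $\langle g(\bar w),\bar w-u\rangle$ via (bi)linearity of $g$ all match, and the per-iteration inequality you isolate, $\Phi_{t+1}(z_{t+1})\ge\Phi_t(z_t)+\frac{1}{2\kappa}\langle g(w_t), w_t-\bar z\rangle$, is exactly the paper's key claim. The genuine gap is that you never prove it. You lower-bound the increment by the residual $R_t=V^r_{z_t}(w_t)+V^r_{w_t}(z_{t+1})+\frac{1}{2\kappa}\langle g(w_t)-2g(z_t), z_{t+1}-w_t\rangle$ and then assert that area-convexity together with the optimality of $w_t$ and $z_{t+1}$ gives $R_t\ge 0$; this is the entire difficulty, and as sketched it does not go through. (i) Area-convexity bounds the cross term $\langle g(w_t)-g(z_t), w_t-z_{t+1}\rangle$ by $\kappa\,(r(z_t)+r(w_t)+r(z_{t+1})-3r(c_t))$ with $c_t=(z_t+w_t+z_{t+1})/3$, \emph{not} by Bregman divergences, and the only way to exploit the $-3r(c_t)$ term is to compare the prox objective of $z_t$ against its value at the centroid, i.e. $\langle s_t, z_t\rangle+r(z_t)\le\langle s_t, c_t\rangle+r(c_t)$; this centroid comparison is the crux of the paper's proof and is absent from your plan. (ii) By the time you reach $R_t$ you have already spent the optimality of $z_t$ (tested at $w_t$) and of $w_t$ (tested at $z_{t+1}$) in first-order form, discarding the value-comparison slack; the leftover raw terms $\frac{1}{\kappa}\langle g(z_t), w_t-z_{t+1}\rangle$ and $\frac{1}{2\kappa}\langle g(w_t), z_{t+1}-w_t\rangle$ enter $R_t$ with signs for which the remaining optimality conditions only bound them in the useless direction (they provide upper bounds where you need lower bounds). (iii) Most fundamentally, asking $V^r_{z_t}(w_t)+V^r_{w_t}(z_{t+1})$ to absorb an operator cross term is the classical mirror-prox/strong-convexity requirement, which is precisely what fails with good constants for this non-strongly-convex regularizer over the box and what area-convexity was introduced to avoid; your $R_t\ge 0$ is a strictly stronger statement than the per-iteration inequality itself, and nothing in the proposal establishes it.

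The repair is to prove the per-iteration inequality directly from three facts and then sum them: optimality of $z_t$ in \emph{value} form tested at the centroid $c_t$, optimality of $w_t$ in value form tested at $z_{t+1}$, and area-convexity applied to the triple $(z_t,w_t,z_{t+1})$. Adding these gives $\frac{1}{\kappa}\langle g(w_t), w_t-z_{t+1}\rangle\le 2\bigl(r(z_{t+1})-r(z_t)\bigr)+2\langle s_t, z_{t+1}-z_t\rangle$; dividing by two and adding $\frac{1}{2\kappa}\langle g(w_t), z_{t+1}-\bar z\rangle$ to both sides yields exactly your per-iteration inequality, after which the rest of your argument goes through unchanged.
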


In fact, by more carefully analyzing the requirements of dual extrapolation we have the following.

\begin{restatable}{corollary}{restateDualExEps}
\label{corr:dualex}
Suppose in Algorithm~\ref{alg:dualex}, the proximal steps are implemented with $\epsilon'$ additive error. Then, the upper bound of the regret in Lemma~\ref{lem:dualex} is $2\kappa\Theta/T + \epsilon'$.
\end{restatable}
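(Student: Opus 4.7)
The plan is to re-examine the proof of Lemma~\ref{lem:dualex} and track precisely where the exact optimality of the proximal steps $z_t$ and $w_t$ is invoked, then replace each invocation with the analogous statement for an $\epsilon'$-additive-approximate minimizer. The standard dual extrapolation analysis hinges on the three-point identity that follows from the first-order optimality of $w^{\star} = \prox^r_{\bar z}(g)$, namely
\[
\inprod{g}{w^{\star} - u} = V^r_{\bar z}(u) - V^r_{w^{\star}}(u) - V^r_{\bar z}(w^{\star})
\]
for every feasible $u$. This identity is used twice per iteration (once for $z_t$ with dual state $s_t$, and once for $w_t$ with dual state $s_t + g(z_t)/\kappa$), and the inequalities are telescoped, with area-convexity providing the cancellation required to turn them into a regret bound.

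First I would derive the approximate analog. If $\tilde w$ satisfies $\inprod{g}{\tilde w} + V^r_{\bar z}(\tilde w) \le \inprod{g}{w^{\star}} + V^r_{\bar z}(w^{\star}) + \epsilon'$, then combined with the inequality $\inprod{g}{w^{\star}} + V^r_{\bar z}(w^{\star}) \le \inprod{g}{u} + V^r_{\bar z}(u)$ coming from the optimality of $w^{\star}$ (or directly from near-optimality of $\tilde w$ against $u$), one obtains
\[
\inprod{g}{\tilde w - u} \le V^r_{\bar z}(u) - V^r_{\bar z}(\tilde w) + \epsilon'.
\]
Each iteration now contributes at most $2\epsilon'$ of extra slack, one for each proximal call. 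I would then substitute these approximate three-point inequalities into the exact telescoping used to prove Lemma~\ref{lem:dualex}, checking that the area-convexity condition \eqref{eq:areaconvex}, which governs how the gradient operator interacts with the regularizer on triangles of iterates, is applied to quantities independent of the $\epsilon'$ perturbation and so continues to yield the same cancellation. Summing the perturbed inequalities over $t = 0, \ldots, T-1$ and dividing by $T$ (using convexity of the duality gap in the averaged iterate $\bar w$) yields a regret bound of $2\kappa\Theta/T + 2\epsilon'$, which after rescaling $\epsilon'$ by a constant is the claimed $2\kappa\Theta/T + \epsilon'$.

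The main subtlety, and the step that needs the most care, is that the approximate inequality above is strictly weaker than the exact one: it loses the $-V^r_{\tilde w}(u)$ term on the right-hand side. In the exact proof this term typically cancels a $+V^r_{z_t}(w_t)$ piece produced by the area-convexity inequality at the next step; if that cancellation is essential, one has either to (i) augment the definition of approximate minimizer so that the Bregman divergence between the approximate and exact solutions is also small, or (ii) absorb the lost term into the area-convexity slack. Since the left-hand side of \eqref{eq:areaconvex} is a flexible ``second-order'' quantity and the analysis already produces nonnegative Bregman terms, option (ii) should go through without degradation in the rate, so the only cost is the $O(T\epsilon')$ additive accumulation, which averages down to the $\epsilon'$ term in the corollary.
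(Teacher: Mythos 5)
Your overall strategy --- propagate the $\epsilon'$ error through the per-iteration inequality in the proof of Lemma~\ref{lem:dualex} and telescope --- is exactly what the paper does: it observes that \eqref{eq:mainclaim} holds up to an additive $\epsilon'$ once the proximal steps are only $\epsilon'$-approximate, so the potential $\Phi_k$ increases by at most $\epsilon'$ per iteration, giving $\Phi_T \leq \Phi_0 + T\epsilon'$ and hence the claimed bound (both you and the paper are equally loose about the constant multiplying $\epsilon'$, which is really $O(\kappa\epsilon')$ and is absorbed by rescaling). The one substantive difference is your model of how Lemma~\ref{lem:dualex} is proved: the paper never invokes first-order optimality or the three-point identity for $z_t$, $w_t$. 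It only uses that each prox output has objective value no larger than that of a specific comparison point --- $z_t$ against $c_t = (z_t + w_t + z_{t+1})/3$ and $w_t$ against $z_{t+1}$ in \eqref{eq:dexproof}, plus $z_T$ against $u$ in the final chain --- together with area-convexity, which involves no optimality at all. Consequently the ``lost $-V^r_{\tilde w}(u)$ term'' you worry about never appears, and your vaguely justified option (ii) (absorbing it into the area-convexity slack) is unnecessary: each value comparison simply degrades by exactly $\epsilon'$, and the telescoping is untouched. Had the analysis genuinely relied on the exact three-point identity, that hand-wave would be the weak link of your argument; as it stands, once you carry out your stated plan of tracking precisely where optimality is invoked, the proof closes immediately and coincides with the paper's.
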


We now state a useful second-order characterization of area-convexity involving a relationship between the Jacobian of $g$ and the Hessian of $r$, which was proved in \cite{Sherman17}.

\begin{theorem}[Second-order area-convexity, Theorem 1.6 in \cite{Sherman17}]
	\label{thm:socarea}
	For bilinear minimax objectives, i.e. whose associated operator $g$ has Jacobian
	\begin{equation*}
	J = \begin{pmatrix}
	0 & M^\top \\
	-M & 0
	\end{pmatrix},
	\end{equation*}
	and for twice-differentiable $r$, if for any $z$ in the domain,
	\begin{equation*}
	\begin{pmatrix}
	\kappa \nabla^2 r(z) & -J \\
	J & \kappa \nabla^2 r(z)
	\end{pmatrix}
	\succeq 0,
	\end{equation*}
	then $r$ is $3\kappa$-area-convex with respect to $g$.
	%
	%
\end{theorem}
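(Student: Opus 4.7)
The plan is to verify the area-convexity inequality directly from the PSD hypothesis. Let $\bar z = (a+b+c)/3$, and change variables to $u = a - \bar z$, $v = b - \bar z$, $w = c - \bar z$, so $u + v + w = 0$. Using bilinearity ($g(b) - g(a) = J(b-a)$), the identity $p^\top J p = 0$ that holds for the given antisymmetric block form of $J$, and the relation $w = -u - v$, a direct calculation yields
\begin{equation*}
\langle g(b) - g(a), b - c\rangle = 3\, u^\top J v.
\end{equation*}
The theorem thus reduces to establishing $\kappa (r(a) + r(b) + r(c) - 3 r(\bar z)) \geq u^\top J v$.

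For the left-hand side I would apply the second-order integral Taylor formula along the segments from $\bar z$ to each of $a, b, c$. Because $u + v + w = 0$, the first-order terms cancel in the sum, yielding
\begin{equation*}
r(a) + r(b) + r(c) - 3 r(\bar z) = \int_0^1 (1-t)\!\left[u^\top H(\bar z + tu) u + v^\top H(\bar z + tv) v + w^\top H(\bar z + tw) w\right] dt,
\end{equation*}
where $H = \nabla^2 r$. The block PSD hypothesis is equivalent to $\kappa(p^\top H(z) p + q^\top H(z) q) \geq 2 p^\top J q$ for all $p, q$ at every $z$, which in particular forces $r$ to be convex. Summing this inequality cyclically over the pairs $(u,v)$, $(v,w)$, $(w,u)$ and noting that each of $u^\top J v$, $v^\top J w$, $w^\top J u$ equals $u^\top J v$ (by antisymmetry and $u + v + w = 0$), one obtains the pointwise bound
\begin{equation*}
\kappa\left(u^\top H(z) u + v^\top H(z) v + w^\top H(z) w\right) \geq 3\, u^\top J v \quad \text{for all } z.
\end{equation*}

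In the quadratic case, $H$ is constant and this pointwise inequality applied at $z = \bar z$, together with the Taylor identity $r(a)+r(b)+r(c)-3r(\bar z) = \tfrac{1}{2}(u^\top H u + v^\top H v + w^\top H w)$, immediately yields even $2\kappa$-area-convexity. For general $r$, the main obstacle is that the Taylor integrand evaluates $H$ at three \emph{distinct} points $\bar z + tu$, $\bar z + tv$, $\bar z + tw$, while the derived pointwise bound uses a single $z$. My plan is to apply the PSD inequality separately at each of the three integration points (using vector pairs chosen so that the ``matching'' quadratic form appears as one of the diagonal contributions), discard the resulting non-negative cross terms by convexity of $r$, and then integrate against the weight $(1-t)$. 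The factor $3$ in the conclusion, rather than $2$ in the quadratic case, absorbs precisely the slack introduced by discarding these cross terms. The delicate part is arranging the combination of PSD applications—likely a symmetric aggregate over the three vector pairs $(u,v)$, $(v,w)$, $(w,u)$ at each base point—so that the leftover non-diagonal contributions telescope to something non-negative rather than a residual of the wrong sign.
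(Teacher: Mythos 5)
The paper itself does not prove this statement; it is quoted as Theorem 1.6 of \cite{Sherman17} and used as a black box, so there is no in-paper proof to compare against and I review your plan on its own terms. Your reductions are correct as far as they go: with $u+v+w=0$ one indeed has $\langle g(b)-g(a),\,b-c\rangle = 3\,u^\top J v$ for an affine $g$ with antisymmetric Jacobian $J$; the centroid Taylor identity with weight $(1-t)$ is right; the block PSD hypothesis is equivalent to $\kappa\bigl(p^\top \nabla^2 r(z)\,p + q^\top \nabla^2 r(z)\,q\bigr) \ge 2\,p^\top J q$ for all $p,q$ and all $z$; and in the quadratic case this yields $2\kappa$-area-convexity exactly as you say.

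The gap is the step you yourself label the delicate part, and the mechanism you propose for it does not work as stated. For non-quadratic $r$, the Taylor formula only supplies the quadratic form of $\nabla^2 r(\bar z + tu)$ in the single direction $u$ (and similarly for $v$, $w$ at their own base points). If you apply the PSD inequality at $\bar z + tu$ to a pair $(u,q)$, it reads $\kappa\,u^\top \nabla^2 r(\bar z+tu)\,u \;\ge\; 2\,u^\top J q \;-\; \kappa\,q^\top \nabla^2 r(\bar z+tu)\,q$: the extra diagonal term enters with a \emph{minus} sign, so it is a quantity you would have to upper-bound at a point where the Taylor expansion gives you no control; it is not a nonnegative cross term you may "discard by convexity" (dropping it weakens the hypothesis you are using, not the goal you are proving). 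This is exactly where the substance of the theorem lies and where the constant degrades from $2\kappa$ to $3\kappa$. Sherman's argument does not do one-shot Taylor bookkeeping around the centroid: it passes from the infinitesimal second-order condition to a finite triangle by a subdivision/limiting argument, in which the bilinear "area" term behaves additively over sub-triangles while the regularizer term is controlled under subdivision, and the factor $3$ emerges from that process. Without an argument of that kind, or some other device coupling Hessians evaluated at different base points, your proposal establishes the claim only for quadratic $r$ — which does not cover the case needed in this paper, since the regularizer actually used (entropy plus $x^\top A^\top (y^2)$) is not quadratic.
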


Finally, we complete the outline of the algorithm by stating the specific regularizer we use, which first appeared in \cite{Sherman17}. We then prove its 3-area-convexity with respect to $g$ by using Theorem~\ref{thm:socarea}. 
\begin{equation}
\label{eq:shermanreg}
r(x, y) = 2\dmax\left(10 \sum_{j \in [n]} x_j \log x_j + x^\top A^\top (y^2)\right),
\end{equation}
where $(y^2)$ is entry-wise.

\begin{restatable}[Area-convexity of the Sherman regularizer]{lemma}{restateAreaConvexity}
	\label{lem:rsoc}
	For the Jacobian $J$ associated with the objective in \eqref{eq:pdl1reg} and the regularizer $r$ defined in \eqref{eq:shermanreg}, we have
	\begin{equation*}
	\begin{pmatrix}
	\nabla^2 r(z) & -J \\
	J & \nabla^2 r(z)
	\end{pmatrix}
	\succeq 0.  
	\end{equation*}
\end{restatable}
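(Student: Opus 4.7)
The plan is to apply Theorem~\ref{thm:socarea} with $\kappa = 1$, which reduces the claim to directly verifying positive semidefiniteness of the stated $2(m + 2n) \times 2(m + 2n)$ block matrix. First, I would set $D \defeq 2\dmax$ and differentiate $r$ to obtain
\begin{equation*}
\nabla^2 r(z) = \begin{pmatrix} 10D\,\mathrm{diag}(1/x) & 2D\,A^\top \mathrm{diag}(y) \\ 2D\,\mathrm{diag}(y)\,A & 2D\,\mathrm{diag}(Ax) \end{pmatrix},\quad J = \begin{pmatrix} 0 & D\,A^\top \\ -D\,A & 0 \end{pmatrix}.
\end{equation*}
Since $J$ is skew-symmetric, the block PSD statement is equivalent to the scalar inequality
\begin{equation*}
Q(u, v) \defeq u^\top \nabla^2 r(z)\, u + v^\top \nabla^2 r(z)\, v - 2 u^\top J v \geq 0
\end{equation*}
for all $u = (a, b)$, $v = (c, d)$ with $a, c \in \R^m$ and $b, d \in \R^{2n}$.

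Next, I would exploit the combinatorial structure of $A$: as the unsigned edge-incidence matrix of the underlying bipartite graph, each of its $m$ columns has exactly two nonzero entries (one per endpoint). Every bilinear coupling in $Q$ between an edge coordinate $a_e, c_e$ and a vertex coordinate $b_v, d_v$ is therefore supported only on pairs $(e, v)$ with $v$ an endpoint of $e$. Rewriting each sum $\sum_v \xi_v (Aw)_v = \sum_{(e, v) : v \in e} \xi_v w_e$, splitting the entropy contribution $10D(a_e^2 + c_e^2)/x_e$ evenly between the two endpoints of $e$, and regrouping yields $Q = \sum_{(e, v) : v \in e} E_{e, v}$, where each summand is a quadratic form in the four scalars $(a_e, c_e, b_v, d_v)$ with coefficients depending only on $x_e$ and $y_v$.

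In blocked form, each $E_{e, v}$ (after factoring out the common $D$) is represented by the symmetric $4 \times 4$ matrix
\begin{equation*}
\begin{pmatrix} (5/x_e)\,I_2 & B(y_v) \\ B(y_v)^\top & (2x_e)\,I_2 \end{pmatrix},\quad B(y_v) = \begin{pmatrix} 2y_v & -1 \\ 1 & 2y_v \end{pmatrix}.
\end{equation*}
A short computation gives $B(y_v)\,B(y_v)^\top = (4y_v^2 + 1)\, I_2$, so the Schur complement of the lower-right block equals $\frac{9 - 4 y_v^2}{2 x_e}\, I_2$. Because the domain restricts $y \in [-1, 1]^{2n}$, we have $4 y_v^2 \leq 4 < 9$, so this Schur complement is positive definite, giving $E_{e, v} \geq 0$. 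Summing over $(e, v)$ with $v \in e$ yields $Q \geq 0$.

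The main nontrivial step is identifying the per-$(e, v)$ decomposition and the correct way to split the entropy mass; everything else is bookkeeping. The entropy coefficient $10$ in \eqref{eq:shermanreg} is tuned for exactly this reason: after the equal split into $5/x_e$ per endpoint, the Schur complement stays PSD for all $|y_v| \leq 1$. Any coefficient smaller than $5$ would make this step fail at $|y_v| = 1$, while $10$ leaves a factor-of-two margin that is absorbed into the overall convergence-rate constants. Once the per-pair nonnegativity is established, reassembling the sums mechanically recovers $Q$, since the two halves of each entropy term add back to $10D(a_e^2 + c_e^2)/x_e$ and the incidence sums restore the $\sum_v$ expressions in the original $Q$.
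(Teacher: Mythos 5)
Your proof is correct and follows essentially the same route as the paper: both reduce the block positive-semidefiniteness claim to nonnegativity of a quadratic form summed over edge--vertex incidence pairs, splitting the entropy term $10/x_e$ evenly as $5/x_e$ per endpoint (valid since each column of $A$ has exactly two ones). The only difference is the final elementary step: you certify each per-pair $4\times 4$ form via the Schur complement computation $B(y_v)B(y_v)^\top = (4y_v^2+1)I$, giving $(9-4y_v^2)/(2x_e)\geq 0$, whereas the paper groups the same eight terms into four explicit complete squares; both verifications are valid and interchangeable.
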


We now give the proof of Theorem~\ref{thm:mainthm}, requiring some claims in Appendix~\ref{ssec:altmin} for the complexity of Algorithm~\ref{alg:dualex}. In particular, Appendix~\ref{ssec:altmin} implies that although the minimizer to the proximal steps cannot be computed in closed form because of non-separability, a simple alternating scheme converges to an approximate-minimizer in near-constant time.

\begin{proof}[Proof of Theorem~\ref{thm:mainthm}]
The algorithm is Algorithm~\ref{alg:dualex}, using the regularizer $r$ in \eqref{eq:shermanreg}. Clearly, in the feasible region the range of the regularizer is at most $20\dmax\log n + 4\dmax$, where the former summand comes from the range of entropy and the latter $\norm{A^\top}_{\infty} = 2$. Thus, we may choose $\Theta = O(\dmax\log n)$ in Lemma~\ref{lem:dualex}, since $\inprod{\nabla r(\bar{z})}{\bar{z} - u} \leq 0 \Rightarrow V^r_{\bar{z}}(u) \leq r(u) - r(\bar{z})$ for all $u$. 

By Theorem~\ref{thm:socarea} and Lemma~\ref{lem:rsoc}, $r$ is 3-area-convex with respect to $g$. By Corollary~\ref{corr:dualex}, $T = 12\Theta/\epsilon$ iterations suffice, implementing each proximal step to $\epsilon/2$-additive accuracy. Finally, using Theorem~\ref{thm:altcomplexity} to bound this implementation runtime concludes the proof.
\end{proof}
	\section{Rounding to $\urc$}
\label{sec:rounding}

We state the rounding procedure in \cite{AltschulerWR17} for completeness here, which takes a transport plan $\tilde{X}$ close to $\urc$ and transforms it into a plan which exactly meets the constraints and is close to $\tilde{X}$ in $\ell_1$, and then prove its correctness in Appendix~\ref{app:rounding}. Throughout $r(X) \defeq X\1, c(X) \defeq X^\top \1$.

\begin{algorithm}[H]
	\caption{$\hat{X} = \texttt{Rounding}(\tilde{X}, r, c)$: Rounding to feasible polytope}
	\label{alg:round}
	\begin{algorithmic}
		\STATE $X' \gets \diag{\min\left(\frac{r}{r(\tilde{X})}, 1\right)} \tilde{X}$.
		\STATE $X'' \gets X' \diag{\min\left(\frac{c}{c(X')}, 1\right)}$.
		\STATE $e_r \gets r - \1^\top r(X''), e_c \gets c - \1^\top c(X''), E \gets \1^\top e_r$.
		\STATE $\hat{X} \gets X'' + \frac{1}{E} e_r e_c^\top$.
		\STATE $\textbf{return}$ $\hat{X}$.
	\end{algorithmic}
\end{algorithm}
	\section{Experiments}
\label{sec:experiments}

We show experiments illustrating the potential of our algorithm to be useful in practice, by considering its performance on computing optimal transport distances on the MNIST dataset and comparing against algorithms in the literature including APDAMD \cite{LinHJ19} and Sinkhorn iteration. All comparisons are based on the number of matrix-vector multiplications (rather than iterations, due to our algorithm's alternating subroutine), the main computational component of all algorithms considered.

\begin{figure}[ht!]
\centering
	\begin{subfigure}{.48\textwidth}
		\centering
		\includegraphics[width=\linewidth]{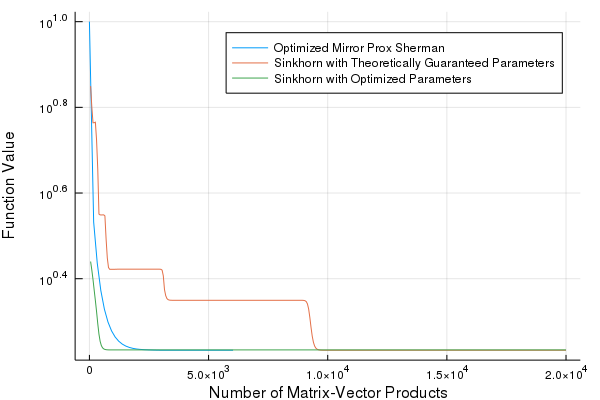}
		\caption{Comparison with Sinkhorn iteration.}
		\label{fig:sub1}
	\end{subfigure}
	\begin{subfigure}{.48\textwidth}
		\centering
		\includegraphics[width=\linewidth]{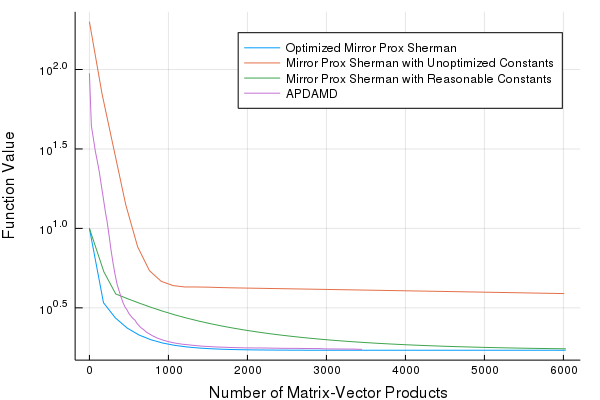}
		\caption{Comparison with APDAMD \cite{LinHJ19}.}
		\label{fig:sub2}
	\end{subfigure}
\label{fig:experiments}
\end{figure}

While our unoptimized algorithm performs poorly, slightly optimizing the size of the regularizer and step sizes used results in an algorithm with competitive performance to APDAMD, the first-order method with the best provable guarantees and observed practical performance. Sinkhorn iteration outperformed all first-order methods experimentally; however, an optimized version of our algorithm performed better than conservatively-regularized Sinkhorn iteration, and was more competitive with variants of Sinkhorn found in practice than other first-order methods.

As we discuss in our implementation details (Appendix~\ref{app:experiment_details}), we acknowledge that implementations of our algorithm illustrated are not the same as those with provable guarantees in our paper. However, we believe that our modifications are justifiable in theory, and consistent with those made in practice to existing algorithms. Further, we hope that studying the modifications we made (step size, using mirror prox \cite{Nemirovski04} for stability considerations), as well as the consideration of other numerical speedups such as greedy updates \cite{AltschulerWR17} or kernel approximations \cite{AltschulerBRW18}, will become fruitful for understanding the potential of accelerated first-order methods in both the theory and practice of computational optimal transport.

	\subsection*{Acknowledgments}
	We thank Jose Blanchet and Carson Kent for helpful conversations.
	
	
	\bibliographystyle{alpha}	
	\bibliography{optimal-transport}
	\newpage
	\begin{appendix}	
		
	\section{Algorithm}
\label{app:algorithm}

We give the complete algorithm for approximating optimal transport distance to additive $\epsilon$ here. We assume $C \in \R_{\geq 0}^{n \times n}$ and $r, c \in \Delta^n$. Finally, we refer to blocks of variable $z$ on a product space as $z^x, z^y$, i.e. $z = (z^x, z^y)$. Again $r(X) \defeq X\1$, $c(X) \defeq X^\top\1$.

\begin{algorithm}[H]
	\caption{$\hat{X} = \texttt{Optimal-Transport}(C, \epsilon, r, c)$: Produces $\epsilon$-approximate transportation plan}
	\label{alg:full}
	\begin{algorithmic}
		\STATE Vectorize $C$ to produce $d$.
		\STATE Let $b$ be $r, c$ concatenated; let $A$ be the incidence matrix of a complete $n \times n$ bipartite graph.
		\STATE $t \gets 0$.
		\STATE $x_0 \gets \frac{1}{n^2}\1$, $y_0 \gets \bf{0}_{2n}$.
		\STATE $s_0^x \gets \bf{0}_{n^2}$, $s_0^y \gets \bf{0}_{2n}$.
		\STATE $\Theta \gets 20\dmax \log n + 4\dmax$.
		\WHILE {$d^\top x_{t + \half} + 2\dmax\norm{Ax_{t + \half} - b}_1 \leq -2\dmax b^\top y_{t + \half} + \max_j \left[d + 2\dmax A^\top y_{t + \half}\right]_j + \epsilon$}
		\STATE $t \gets t + 1$.
		\STATE $k \gets 0$.
		\STATE $x'_0 \gets x_{t - \half}$, $y'_0 \gets y_{t - \half}$.
		\FOR{$0 \leq k < \ceil{24 \log\left(\left(\frac{88\dmax}{\epsilon^2} + \frac{2}{\epsilon}\right)\Theta\right)}$} 
		\STATE $x'_k \gets \exp\left(\frac{1}{20\dmax}s_t^x + \frac{1}{10} A^\top (y'_{k - 1})^2 \right)$, $x'_k \gets x'_k / \norm{x'_k}_1$.
		\STATE $y'_k \gets \min\left(1, \max\left(-1, \frac{-s_t^y}{4\dmax Ax'_k} \right)\right)$. Operations are element-wise.
		\ENDFOR
		\STATE $x_{t} \gets x'_k$, $y_{t} \gets y'_k$.
		\STATE $s_{t + \half}^x \gets s_t^x + \frac{1}{3}\left(d + 2\dmax A^\top y_{t}\right)$.
		\STATE $s^y_{t + \half} \gets s_t^y + \frac{1}{3}\left(2\dmax(b - Ax_{t})\right)$.
		\STATE $k \gets 0$.
		\STATE $x'_0 \gets x_{t}$, $y'_0 \gets y_{t}$. 
		\FOR {$0 \leq k < \ceil{24 \log\left(\left(\frac{88\dmax}{\epsilon^2} + \frac{2}{\epsilon}\right)\Theta\right)}$}
		\STATE $x'_k \gets \exp\left(\frac{1}{20\dmax}s_{t + \half}^x + \frac{1}{10} A^\top (y'_{k - 1})^2 \right)$, $x'_k \gets x'_k / \norm{x'_k}_1$.
		\STATE $y'_k \gets \min\left(1, \max\left(-1, \frac{-s_{t + \half}^y }{4\dmax Ax'_k} \right)\right)$. Operations are element-wise.
		\ENDFOR
		\STATE $x_{t + \half} \gets x'_k$, $y_{t + \half} \gets y'_k$.
		\STATE $s_{t + 1}^x \gets s_t^x + \frac{1}{6}\left(d + 2\dmax A^\top y_{t + \half}\right)$.
		\STATE $s^y_{t + 1} \gets s_t^y + \frac{1}{6}\left(2\dmax(b - Ax_{t + \half})\right)$.
		\ENDWHILE
		\STATE Un-vectorize $x$ to produce $\tilde{X}$.
		\STATE $X' \gets \diag{\min\left(\frac{r}{r(\tilde{X})}, 1\right)} \tilde{X}$.
		\STATE $X'' \gets X' \diag{\min\left(\frac{c}{c(X')}, 1\right)}$.
		\STATE $e_r \gets r - \1^\top r(X''), e_c \gets c - \1^\top c(X''), E \gets \1^\top e_r$.
		\STATE $\hat{X} \gets X'' + \frac{1}{E} e_r e_c^\top$.
		\STATE $\textbf{return}$ $\hat{X}$.
	\end{algorithmic}
\end{algorithm}

We remark that there are a variety of termination conditions that can be useful in practice for the alternating minimization procedure. For example, a standard early-stopping condition based on the observed movement of consecutive iterates was very successful in practice (Appendix~\ref{app:experiment_details}).
	\section{Missing proofs from Section~\ref{sec:extragradient}}
\label{app:extragradient}

In this section, we state missing proofs from Section~\ref{sec:extragradient}. We provide the efficient implementation of the proximal steps required by Algorithm~\ref{alg:dualex} in Appendix~\ref{ssec:altmin}.

\restateDualEx*
\begin{proof}
	Our first step is to prove the following inequality:
	\begin{equation}
	\label{eq:mainclaim}
	\frac{1}{2\kappa} \inprod{g(w_t)}{w_t - \bar{z}} \leq \inprod{s_{t + 1}}{z_{t + 1} - \bar{z}} + V^r_{\bar{z}}(z_{t + 1}) - \inprod{s_t}{z_t - \bar{z}} - V^r_{\bar{z}}(z_t).
	\end{equation}
	Let $c_t = \frac{z_t + w_t + z_{t + 1}}{3}$. The proof follows from minimality of $z_t$ with respect to $c_t$, minimality of $w_t$ with respect to $z_{t + 1}$, and area-convexity \eqref{eq:areaconvex} with respect to $z_t$, $w_t$, and $z_{t + 1}$. Respectively,
	\begin{equation}
	\label{eq:dexproof}
	\begin{aligned}
	\inprod{s_t}{z_t} + r(z_t) &\leq \inprod{s_t}{c_t} + r(c_t) \\
	\inprod{s_t}{w_t} + \frac{1}{\kappa}\inprod{g(z_t)}{w_t} + r(w_t) &\leq \inprod{s_t}{z_{t + 1}} + \frac{1}{\kappa}\inprod{g(z_t)}{z_{t + 1}} + r(z_{t + 1}) \\
	\frac{1}{\kappa}\inprod{g(w_t) - g(z_t)}{w_t - z_{t + 1}} &\leq r(z_t) + r(w_t) + r(z_{t + 1}) - 3r\left(c_t\right).
	\end{aligned}
	\end{equation}
	Substituting the first equation into the third and using the definition of $c_t$, we have
	\begin{equation*}
	\frac{1}{\kappa}\inprod{g(w_t) - g(z_t)}{w_t - z_{t + 1}} \leq r(w_t) + r(z_{t + 1}) - 2r(z_t) + \inprod{s_t}{w_t + z_{t + 1} - 2z_t}.
	\end{equation*}
	Rearranging the second equation, we have
	\begin{equation*}
	\frac{1}{\kappa}\inprod{g(z_t)}{w_t - z_{t + 1}} \leq r(z_{t + 1}) - r(w_t) + \inprod{s_t}{z_{t + 1} - w_t}.
	\end{equation*}
	Adding these two equations, we have
	\begin{equation*}
	\frac{1}{\kappa}\inprod{g(w_t)}{w_t - z_{t + 1}} \leq 2r(z_{t + 1}) - 2r(z_t) + \inprod{s_t}{2z_{t + 1} - 2z_t}.
	\end{equation*}
	Dividing by 2 and adding $\frac{1}{2\kappa}\inprod{g(w_t)}{z_{t + 1} - \bar{z}}$ to both sides, we obtain the desired \eqref{eq:mainclaim}. Now, define the potential function
	\begin{equation*}
	\Phi_k = \frac{1}{2\kappa}\sum_{t = 0}^{k - 1} \inprod{g(w_t)}{w_t - \bar{z}} - \inprod{s_k}{z_k - \bar{z}} - V^r_{\bar{z}}(z_{k})
	\end{equation*}
	Then, by \eqref{eq:mainclaim}, $\Phi_k$ is nonincreasing in $k$. Therefore for any $u$, by the definition of $\Theta$,
	\begin{align*}
	\frac{1}{T} \sum_{t = 0}^{T - 1}\inprod{g(w_t)}{w_t - u} &\leq \frac{1}{T} \sum_{t = 0}^{T - 1}\inprod{g(w_t)}{w_t - \bar{z}} + \frac{1}{T} \sum_{t = 0}^{T - 1}\inprod{g(w_t)}{\bar{z} - u} + \left(\frac{2\kappa\Theta}{T} - \frac{2\kappa V_{\bar{z}}(u)}{T}\right) \\
	&\leq \frac{1}{T} \sum_{t = 0}^{T - 1}\inprod{g(w_t)}{w_t - \bar{z}} + \frac{1}{T} \sum_{t = 0}^{T - 1}\inprod{g(w_t)}{\bar{z} - z_T} + \left(\frac{2\kappa\Theta}{T} - \frac{2\kappa V_{\bar{z}}(z_T)}{T}\right) \\
	&= \frac{2\kappa}{T} \Phi_T + \frac{2\kappa\Theta}{T} \leq \frac{2\kappa}{T} \Phi_0 + \frac{2\kappa\Theta}{T} = \frac{2\kappa\Theta}{T}.
	\end{align*}
	The inequality on the second line used the definition of $z_T = \prox^r_{\bar{z}}\left(\frac{1}{2\kappa}\sum_{t \in [T - 1]} g(w_t)\right)$, and the last inequality is  $\Phi_T \leq \Phi_0$. The conclusion follows from the definition of $g$ (because it is linear).
\end{proof}

\restateDualExEps*
\begin{proof}
	We see that \eqref{eq:mainclaim} now holds up to $\epsilon'$ additive error, so that $\Phi_k$ is increasing by at most $\epsilon'$ each step. Thus, we obtain $\Phi_T \leq \Phi_0 + T\epsilon'$, yielding the conclusion.
\end{proof}

\restateAreaConvexity*

\begin{proof}
	We scale both $r$ and $J$ down by $2\dmax$, which does not affect positive-semidefiniteness. By computation we have (recalling all columns of $A$ have $\ell_1$ norm of 2)
	\begin{equation*}
	\nabla^2 r(x, y) = \begin{pmatrix}
	5 \norm{A_{:j}}_1 \diag{\frac{1}{x_j}} & 2 A^\top \diag{y_i} \\
	2 \diag{y_i} A & 2 \diag{A_i^\top x} \\
	\end{pmatrix}.
	\end{equation*}
	It suffices to show that for any vector $\begin{pmatrix} a & b & c & d \end{pmatrix}$ we have
	\begin{equation*}
	\begin{pmatrix} a & b & c & d \end{pmatrix}
	\begin{pmatrix}
	5 \norm{A_{:j}}_1 \diag{\frac{1}{x_j}} & 2 A^\top \diag{y_i} & 0 & -A^\top \\
	2 \diag{y_i} A & 2 \diag{A_i^\top x} & A & 0 \\
	0 & A^\top & 5 \norm{A_{:j}}_1 \diag{\frac{1}{x_j}} & 2 A^\top \diag{y_i} \\
	-A & 0 & 2 \diag{y_i} A & 2 \diag{A_i^\top x}
	\end{pmatrix}
	\begin{pmatrix}
	a \\ b \\ c \\ d
	\end{pmatrix} \geq 0.
	\end{equation*}
	Upon simplifying and gathering like terms, it suffices to show
	\begin{equation*}
	\sum_{i, j} A_{ij}\left(\frac{5a_j^2}{x_j} + 4 a_j b_i y_i + 2 b_i^2 x_j - 2 a_j d_i + 2 c_j b_i + \frac{5 c_j^2}{x_j} + 4 c_j d_i y_i + 2 d_i^2 x_j \right) \geq 0.
	\end{equation*}
	However, this is true for $y_i \in [-1, 1]$, since each coefficient groups into clearly nonnegative terms,
	\begin{align*}
	\left(\frac{4a_j^2}{x_j} + 4 a_j b_i y_i + b_i^2 x_j \right) + \left(\frac{a_j^2}{x_j} - 2 a_j d_i + d_i^2 x_j\right) + \left(\frac{4 c_j^2}{x_j} + 4 c_j d_i y_i + d_i^2 x_j \right) + \left(\frac{c_j^2}{x_j} + 2 c_j b_i + b_i^2 x_j \right).
	\end{align*}
\end{proof}

\subsection{Alternating Minimization Analysis}
\label{ssec:altmin}

In this section, we give the convergence analysis of an alternating minimization procedure for minimizing a function of the form (throughout this section, $r(x, y)$ is as in \eqref{eq:shermanreg})
\begin{equation}
\label{eq:proxfunction}
f(x, y) \defeq \inprod{\xi}{x} + \inprod{\eta}{y} + r(x, y)
\end{equation}
which is the type of minimization problem arising from steps of the form $\prox^r_{\bar{z}}(g)$. As we will see, $f(x, y)$ is jointly convex. Throughout this section, let $x_{\opt}, y_{\opt}$ be the minimizer to $f$. Corollary~\ref{corr:dualex} states that $O(\epsilon)$ additive error to $f$ gives the same asymptotic convergence rate in Algorithm~\ref{alg:dualex}. We will show that a simple alternating minimization scheme enjoys a linear rate of convergence in our setting; thus, roughly $O(\log \epsilon^{-1})$ iterations suffice. We first give a proof of a general condition which suffices for linear convergence.

\begin{lemma}
	\label{lem:altminprog}
	Suppose $f(x, y)$ is twice-differentiable and jointly convex, over the product space $\mathcal{X} \times \mathcal{Y}$. Consider the alternating minimization scheme,
	\begin{enumerate}
		\item $x_{k + 1} \defeq \textup{argmin}_{x \in \mathcal{X}} f(x, y_k)$
		\item $y_{k + 1} \defeq \textup{argmin}_{y \in \mathcal{Y}} f(x_{k + 1}, y)$
	\end{enumerate}
	Further, suppose there are convex regions $\mathcal{X}_{k + 1} \subseteq \mathcal{X}$,  $\mathcal{Y}_{k} \subseteq \mathcal{Y}$ which contain $x_{k + 1}, y_k$ respectively, such that for any $x' \in \mathcal{X}_{k + 1}$, $y', y'' \in \mathcal{Y}_{k}$, and for some $\sigma \geq 1$,
	\begin{equation}
	\label{eq:hessdom}
	\nabla^2 f(x', y') \succeq \frac{1}{\sigma} \nabla^2_{yy} f(x_{k + 1}, y''),
	\end{equation}
	where $\nabla^2_{yy}$ is the Hessian with all but the $yy$ block zeroed out. Then, for any $x^* \in \mathcal{X}_{k + 1}$, $y^* \in \mathcal{Y}_k$,
	\begin{equation*}
	f(x_{k + 1}, y_k) - f(x_{k + 1}, y_{k + 1}) \geq \frac{1}{\sigma}\left( f(x_{k + 1}, y_k) - f(x^*, y^*)\right).
	\end{equation*}
\end{lemma}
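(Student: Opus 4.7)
My plan is to compare two one-variable convex functions defined along the segment from $z_k := (x_{k+1}, y_k)$ to $(x^*, y^*)$: the full-direction function
\[F(\lambda) := f(x_{k+1} + \lambda(x^* - x_{k+1}),\, y_k + \lambda(y^* - y_k))\]
and the $y$-only function
\[G(\lambda) := f(x_{k+1},\, y_k + \lambda(y^* - y_k)).\]
Both are convex on $[0,1]$, and the target inequality is equivalent to $f(x_{k+1}, y_{k+1}) - f(z_k) \leq (F(1) - F(0))/\sigma$. The strategy is to bound $F(1) - F(0)$ from below and $f(x_{k+1}, y_{k+1}) - f(z_k)$ from above by matching quadratic functionals of $A := G'(0)$ and $M := \sup_{s \in [0,1]} G''(s)$ (finite by continuity on a compact interval), then complete the square.

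For the lower bound, first-order optimality of $x_{k+1}$ as the minimizer of $f(\cdot, y_k)$ over $\mathcal{X} \supseteq \mathcal{X}_{k+1} \ni x^*$ gives $\nabla_x f(z_k)^T (x^* - x_{k+1}) \geq 0$, hence $F'(0) \geq G'(0) = A$. The Hessian hypothesis applied at $z(\lambda) \in \mathcal{X}_{k+1} \times \mathcal{Y}_k$ with any $y'' \in \mathcal{Y}_k$, together with the fact that $\nabla^2_{yy} f$ annihilates the $x$-block of the test vector, yields $F''(\lambda) \geq G''(\lambda')/\sigma$ for all $\lambda, \lambda' \in [0,1]$ (taking $y'' = y_k + \lambda'(y^* - y_k) \in \mathcal{Y}_k$, valid by convexity since $y^* \in \mathcal{Y}_k$). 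Taking supremum over $\lambda'$ produces the pointwise bound $F''(\lambda) \geq M/\sigma$, and integrating with weight $(1-\lambda)$ gives $F(1) - F(0) \geq A + M/(2\sigma)$. For the upper bound, Taylor's theorem with $G'' \leq M$ gives $G(\lambda) \leq f(z_k) + A\lambda + M\lambda^2/2$ on $[0,1]$, and since the $y$-segment lies in $\mathcal{Y}_k \subseteq \mathcal{Y}$, optimality of $y_{k+1}$ as $\textup{argmin}_{y \in \mathcal{Y}} f(x_{k+1}, y)$ gives $f(x_{k+1}, y_{k+1}) - f(z_k) \leq \min_{\lambda \in [0,1]}[A\lambda + M\lambda^2/2]$.

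A three-case analysis finishes the proof. If $A \geq 0$ the minimum is $0$ while the lower bound is nonnegative. If $A < 0$ and $|A| \leq M$, the unconstrained minimizer $-A/M$ lies in $[0,1]$ and the min is $-A^2/(2M)$; the desired $-A^2/(2M) \leq (A + M/(2\sigma))/\sigma$ rearranges, after multiplying by $2M\sigma^2 > 0$, to the perfect square $(A\sigma + M)^2 \geq 0$. If $|A| > M$ the constrained min is $A + M/2$ at $\lambda = 1$, and the required $A + M/2 \leq A/\sigma + M/(2\sigma^2)$ reduces to $|A| \geq M(\sigma+1)/(2\sigma)$, which follows from $|A| > M$ and $\sigma \geq 1$. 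The main (and somewhat subtle) obstacle is recognizing that one must take the \emph{supremum} over $y''$ in the Hessian hypothesis---producing a uniform lower bound on $F''$ by the very same constant $M$ that upper-bounds $G''$---so that the two quadratic bounds match and the square-completing algebra goes through; the naive pointwise-matched choice $y''(\lambda) = y_k + \lambda(y^* - y_k)$ yields only $F(1) - F(0) \geq (1 - 1/\sigma)A + (G(1) - G(0))/\sigma$, which is not strong enough when $G''$ varies significantly across $[0,1]$.
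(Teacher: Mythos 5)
Your proof is correct, but it takes a genuinely different route from the paper's. The paper introduces the auxiliary point $\tilde{y} = (1-\tfrac{1}{\sigma})y_k + \tfrac{1}{\sigma} y^*$ and proves $f(x_{k+1},y_k) - f(x_{k+1},\tilde y) \geq \tfrac{1}{\sigma}\left(f(x_{k+1},y_k)-f(x^*,y^*)\right)$ directly: it Taylor-expands both differences with integral remainder, drops the $x$-gradient term by optimality of $x_{k+1}$ (exactly as you do), and applies \eqref{eq:hessdom} pointwise at matched parameters along the two paths; the shrunk displacement $\tfrac{1}{\sigma}(y^*-y_k)$ contributes a factor $1/\sigma^2$ to the quadratic term, which combines with the $1/\sigma$ of \eqref{eq:hessdom} to give the claim with no case analysis, and then $f(x_{k+1},y_{k+1})\le f(x_{k+1},\tilde y)$ finishes. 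You instead compare the full segment $F$ with the $y$-only segment $G$ run all the way to $y^*$, reduce everything to the scalars $A=G'(0)$ and $M=\sup_{[0,1]}G''$, and finish by minimizing over $\lambda$ and completing a square; your decoupling of $y''$ from the evaluation point of $\nabla^2 f$ (taking the supremum) is a legitimate reading of \eqref{eq:hessdom}, which indeed allows $y'\neq y''$ --- the paper exploits the same decoupling differently, evaluating $\nabla^2_{yy}$ along the shrunk path $\tilde y_\alpha$ while evaluating $\nabla^2 f$ along the full path. The paper's choice of $\tilde y$ buys a two-line finish with no cases and only pointwise curvature information; your argument makes explicit where the bound is tight (the square $(A\sigma+M)^2$) and avoids introducing the auxiliary point, at the modest cost of the three-case analysis and of needing $M<\infty$, i.e.\ a bounded second derivative along the segment --- a benign regularity assumption also implicit in the paper's integral-remainder expansion and clearly satisfied by the regularizer \eqref{eq:shermanreg} in the actual application.
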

\begin{proof}
	Let $\tilde{y} = \left(1 - \frac{1}{\sigma}\right) y_k + \frac{1}{\sigma} y^*$. We will prove instead that
	\begin{equation*}
	f(x_{k + 1}, y_k) - f(x_{k + 1}, \tilde{y}) \geq \frac{1}{\sigma}\left( f(x_{k + 1}, y_k) - f(x^*, y^*)\right),
	\end{equation*}
	from which the conclusion will follow since $f(x_{k + 1}, y_{k + 1}) \leq f(x_{k + 1}, \tilde{y})$. Note by definition of $\tilde{y}$, as well as optimality of $x_{k + 1}$ which implies $0 \geq \inprod{\nabla_x f(x_{k + 1}, y_k)}{x_{k + 1} - x^*}$,
	\begin{equation}
	\label{eq:firstorder}
	\inprod{\nabla_y f(x_{k + 1}, y_k)}{y_k - \tilde{y}} = \frac{1}{\sigma}\inprod{\nabla_y f(x_{k + 1}, y_k)}{y_k - y^*} \geq \frac{1}{\sigma} \inprod{\nabla f(x_{k + 1}, y_k)}{z_{k + \half} - z^*}
	\end{equation}
	where $z_{k + \half} \defeq (x_{k + 1}, y_k)$ and $z^* \defeq (x^*, y^*)$. Further, let $y_{\alpha} \defeq (1 - \alpha) y_k + \alpha y^*$, $\tilde{y}_{\alpha} \defeq (1 - \alpha) y_k + \alpha \tilde{y}$, and $x_{\alpha} \defeq (1 - \alpha) x_{k + 1} + \alpha x^*$. Then, by Taylor expansion we have $f(x_{k + 1}, y_k) - f(x_{k + 1}, \tilde{y})$ equals
	\begin{align*}
	&\inprod{\nabla_y f(x_{k + 1}, y_k)}{y_k - \tilde{y}} - \int_0^1 \int_0^\beta (\tilde{y} - y_k)^\top \nabla^2_{yy} f(x_{k + 1}, \tilde{y}_{\alpha}) (\tilde{y} - y_k) d\alpha d\beta \\
	\geq \;&\frac{1}{\sigma}\inprod{\nabla f(x_{k + 1}, y_k)}{z_{k + \half} - z^*} - \frac{1}{\sigma^2} \int_0^1 \int_0^\beta (y^* - y_k)^\top \nabla^2_{yy} f(x_{k + 1}, \tilde{y}_{\alpha}) (y^* - y_k) d\alpha d\beta \\
	\geq \;&\frac{1}{\sigma}\left(\inprod{\nabla f(x_{k + 1}, y_k)}{z_{k + \half} - z^*} - \int_0^1 \int_0^\beta (z^* - z_{k + \half})^\top \nabla^2 f(x_{\alpha}, y_{\alpha}) (z^* - z_{k + \half}) d\alpha d\beta\right) \\
	= \;&\frac{1}{\sigma}\left(f(x_{k + 1}, y_k) - f(x^*, y^*)\right).
	\end{align*}
	In the first inequality, we used \eqref{eq:firstorder} and the definition of $\tilde{y}$, and in the second we used \eqref{eq:hessdom} (since $x_{\alpha} \in \mathcal{X}_{k + 1}, y_{\alpha}, \tilde{y}_{\alpha} \in \mathcal{Y}_{k}$ by convexity).
\end{proof}

We now give a helper lemma specialized to the particular $f$ in \eqref{eq:proxfunction}, which will be used in the proof of convergence.

\begin{lemma}
	\label{lem:hessball}
	For some $x_{k + 1}, y_k$, let $\mathcal{X}_{k + 1} = \left\{x \mid x \geq \half x_{k + 1}\right\}$ where the inequality is entrywise, and let $\mathcal{Y}_k$ be the entire domain of $y$ (i.e. $\mathcal{Y}$). Then for any $x' \in \mathcal{X}_{k + 1}, y', y'' \in \mathcal{Y}_k$,
	\begin{equation*}
	\nabla^2 r(x', y') \succeq \frac{1}{12} \nabla^2_{yy} r(x_{k + 1}, y'').
	\end{equation*}
\end{lemma}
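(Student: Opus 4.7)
The strategy is to reduce the claimed matrix inequality to a coordinatewise sum-of-squares check, mirroring the proof of Lemma~\ref{lem:rsoc}. The first observation is that $y''$ drops out of the claim entirely: since $r$ is quadratic in $y$, the $yy$ block of $\nabla^2 r(x_{k+1}, y'')$ equals $4\dmax\diag{A_i^\top x_{k+1}}$, depending only on $x_{k+1}$. Using the explicit Hessian formula computed in the proof of Lemma~\ref{lem:rsoc} and dividing out the common factor $2\dmax$, the claim becomes
\[
\begin{pmatrix} 10\,\diag{1/x'_j} & 2A^\top\diag{y'_i} \\ 2\diag{y'_i}\, A & 2\diag{A_i^\top x'} \end{pmatrix} \;\succeq\; \tfrac{1}{6}\begin{pmatrix} 0 & 0 \\ 0 & \diag{A_i^\top x_{k+1}} \end{pmatrix}.
\]

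Next, I would invoke the entrywise hypothesis $x' \geq \tfrac{1}{2}x_{k+1}$ together with $A \in \{0,1\}^{2n \times m}$ (nonnegative entries), which gives $Ax_{k+1} \leq 2Ax'$ and hence $\tfrac{1}{6}\diag{A_i^\top x_{k+1}} \preceq \tfrac{1}{3}\diag{A_i^\top x'}$. Subtracting this weaker bound from the $yy$ block of the LHS, it suffices to show
\[
\begin{pmatrix} 10\,\diag{1/x'_j} & 2A^\top\diag{y'_i} \\ 2\diag{y'_i}\, A & \tfrac{5}{3}\diag{A_i^\top x'} \end{pmatrix} \;\succeq\; 0.
\]
I would then write the associated quadratic form on test vectors $(a,b)$ and use $\norm{A_{:j}}_1 = 2$ (from the fact that columns of $A$ are $2$-sparse) to re-express $10\sum_j a_j^2/x'_j$ as $\sum_{i,j} A_{ij}\,(5 a_j^2/x'_j)$, reducing the PSD claim to
\[
\sum_{i,j} A_{ij}\left(\tfrac{5 a_j^2}{x'_j} + 4 a_j b_i y'_i + \tfrac{5}{3} b_i^2 x'_j\right) \geq 0.
\]

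The one step with any content is verifying nonnegativity summand-by-summand. Viewed as a quadratic in $a_j$ for fixed $(i,j)$, the discriminant of the expression in parentheses is $16 b_i^2 (y'_i)^2 - \tfrac{100}{3}b_i^2$, which is nonpositive precisely when $(y'_i)^2 \leq 25/12$; this holds with room to spare because $y' \in \mathcal{Y} = [-1,1]^{2n}$, so each summand is a sum of squares. I anticipate no real obstacle: the only subtleties are the two observations above (that $y''$ drops out, and that $x' \geq \tfrac{1}{2}x_{k+1}$ combines cleanly with $A \geq 0$ to let the $yy$-block deficit be paid out of at most $1/3$ of the $yy$-block on the LHS), after which the computation is essentially a recycling of the completion-of-squares at the end of the proof of Lemma~\ref{lem:rsoc}.
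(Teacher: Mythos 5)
Your proof is correct. It reaches the same bottom line as the paper --- a per-edge quadratic-form bound using $\norm{A_{:j}}_1 = 2$, $|y'_i| \leq 1$, and the entrywise monotonicity coming from $x' \geq \half x_{k+1}$ with $A \geq 0$ --- but it is organized differently: the paper first establishes the two-sided diagonal approximation $D(x) \preceq \nabla^2 r(x,y) \preceq 6 D(x)$ with $D(x) = 2\dmax\,\mathrm{diag}\bigl(\norm{A_{:j}}_1 / x_j,\; A_i^\top x\bigr)$, and then chains $\nabla^2 r(x',y') \succeq D(x') \succeq \half D(x_{k+1}) \succeq \frac{1}{12}\nabla^2_{yy} r(x_{k+1},y'')$, whereas you skip the intermediate object entirely, absorb the target $\frac{1}{6}\diag{A_i^\top x_{k+1}} \preceq \frac{1}{3}\diag{A_i^\top x'}$ into the $yy$ block, and verify a single PSD inequality by a coordinatewise discriminant check. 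Your observations are all sound (the $yy$ block indeed equals $4\dmax\diag{A_i^\top x_{k+1}}$ independently of $y''$, and the discriminant condition $(y'_i)^2 \leq 25/12$ holds with slack on $[-1,1]$), so the argument goes through; what the paper's route buys is the reusable statement that the Hessian is a factor-$6$ diagonal approximation (which makes the monotonicity and block-restriction steps one-liners), while your route is more direct and makes visible that the constant $1/12$ is not tight in this step.
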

\begin{proof}
	Recall that (since $\norm{A_{:j}}_1 = 2$)
	\begin{equation*}
	\nabla^2 r(x, y) = 2\dmax \begin{pmatrix}
	5 \norm{A_{:j}}_1 \diag{\frac{1}{x_j}} & 2 A^\top \diag{y_i} \\
	2 \diag{y_i} A & 2 \diag{A_i^\top x} \\
	\end{pmatrix}.
	\end{equation*}
	Consider the diagonal approximation
	\begin{equation*}
	D(x) = 2\dmax \begin{pmatrix}
	\norm{A_{:j}}_1 \diag{\frac{1}{x_j}} & 0 \\
	0 & \diag{A_i^\top x} \\
	\end{pmatrix}.
	\end{equation*}
	We claim for any $y$, 
	\begin{equation}
	\label{eq:diagapprox}
	D(x) \preceq \nabla^2 r(x, y) \preceq 6D(x).
	\end{equation}
	To see this, consider the quadratic forms with respect to some vector $\begin{pmatrix} u & v \end{pmatrix}$:
	\begin{align*}
	\begin{pmatrix}
	u & v
	\end{pmatrix}
	\nabla^2 r(x, y)
	\begin{pmatrix}
	u \\
	v
	\end{pmatrix}
	&= 
	2\dmax \sum_{i, j} A_{ij} \left(\frac{5u_j^2}{x_j} + 4u_j v_i y_i + 2 v_i^2 x_j \right),\\
	\begin{pmatrix}
	u & v
	\end{pmatrix}
	D(x)
	\begin{pmatrix}
	u \\
	v
	\end{pmatrix}
	&= 
	2\dmax \sum_{i, j} A_{ij} \left(\frac{u_j^2}{x_j} + v_i^2 x_j \right).
	\end{align*}
	Now \eqref{eq:diagapprox} follows because for any $y_i \in [-1, 1]$, it's easy to verify
	\begin{equation*}
	\frac{u_j^2}{x_j} + v_i^2 x_j \leq \frac{5u_j^2}{x_j} + 4u_j v_i y_i + 2 v_i^2 x_j \leq 6\left(\frac{u_j^2}{x_j} + v_i^2 x_j\right).
	\end{equation*}
	Therefore, to prove the lemma statement we can use
	\begin{equation*}
	\nabla^2 r(x', y') \succeq D(x') \succeq \half D(x_{k + 1}) \succeq \frac{1}{12} \nabla^2_{yy} r(x_{k + 1}, y'').
	\end{equation*}
	The inequality $D(x') \succeq \half D(x_{k + 1})$ followed from the definition of $\mathcal{X}_{k + 1}$, and the last inequality followed from $D(x_{k + 1})$ spectrally dominating $\frac{1}{6}\nabla^2 r(x_{k + 1}, y'')$, and restrictions of $D(x_{k + 1})$ to the $yy$ block can only decrease the quadratic form.
\end{proof}

We now give the proof of the linear rate of convergence.

\begin{lemma}
	\label{lem:convergef}
	For $f(x, y)$ defined in \eqref{eq:proxfunction}, the alternating minimization scheme
	\begin{enumerate}
		\item $x_{k + 1} \defeq \textup{argmin}_{x \in \mathcal{X}} f(x, y_k)$.
		\item $y_{k + 1} \defeq \textup{argmin}_{y \in \mathcal{Y}} f(x_{k + 1}, y)$.
	\end{enumerate}
	decreases the function error $f(x_k, y_k) - f(x_{\opt}, y_{\opt})$ by a factor of at least $1/24$ in each iteration.
\end{lemma}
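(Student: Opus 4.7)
The plan is to combine the alternating-minimization progress guarantee of Lemma~\ref{lem:altminprog} with the Hessian-domination bound of Lemma~\ref{lem:hessball}, which together give progress of a factor $1/\sigma = 1/12$ per $y$-step relative to a suitable reference point. Since the $x$-update only decreases the objective ($f(x_{k+1}, y_k) \leq f(x_k, y_k)$), it suffices to quantify the contraction obtained from the $y$-step alone, applied to the error $E_k \defeq f(x_k, y_k) - f(x_\opt, y_\opt)$.

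The main obstacle is that Lemma~\ref{lem:hessball} requires the reference point $x^*$ in Lemma~\ref{lem:altminprog} to lie in $\mathcal{X}_{k+1} = \{x : x \geq \tfrac{1}{2} x_{k+1}\}$, and the true minimizer $x_\opt$ need not lie in this region. To sidestep this, I would invoke Lemma~\ref{lem:altminprog} with the midpoint choice
\begin{equation*}
x^* \defeq \tfrac{1}{2}(x_{k+1} + x_\opt), \qquad y^* \defeq \tfrac{1}{2}(y_k + y_\opt).
\end{equation*}
Since $x_\opt$ has nonnegative entries, $x^* \geq \tfrac{1}{2} x_{k+1}$ entrywise, so $x^* \in \mathcal{X}_{k+1}$; and $y^*$ is a convex combination of points in $\mathcal{Y}$, hence in $\mathcal{Y}_k = \mathcal{Y}$. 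One checks that all the convex combinations $x_\alpha, y_\alpha, \tilde{y}_\alpha$ appearing inside the proof of Lemma~\ref{lem:altminprog} also remain in $\mathcal{X}_{k+1} \times \mathcal{Y}$, so the hypothesis $\nabla^2 f \succeq \tfrac{1}{12}\nabla^2_{yy} f$ supplied by Lemma~\ref{lem:hessball} holds throughout (noting $\nabla^2 f = \nabla^2 r$ since $f - r$ is linear). The conclusion of Lemma~\ref{lem:altminprog} with $\sigma = 12$ then gives
\begin{equation*}
f(x_{k+1}, y_{k+1}) - f(x_\opt, y_\opt) \leq \tfrac{11}{12}\bigl(f(x_{k+1}, y_k) - f(x_\opt, y_\opt)\bigr) + \tfrac{1}{12}\bigl(f(x^*, y^*) - f(x_\opt, y_\opt)\bigr).
\end{equation*}

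The final step is to bound the second term via joint convexity of $f$ applied at the midpoint:
\begin{equation*}
f(x^*, y^*) \leq \tfrac{1}{2} f(x_{k+1}, y_k) + \tfrac{1}{2} f(x_\opt, y_\opt),
\end{equation*}
so $f(x^*, y^*) - f(x_\opt, y_\opt) \leq \tfrac{1}{2}\bigl(f(x_{k+1}, y_k) - f(x_\opt, y_\opt)\bigr)$. Substituting this into the previous display and using monotonicity of the $x$-step yields $E_{k+1} \leq \tfrac{23}{24} E_k$, as claimed. The midpoint trick is the key move: it converts the region-containment constraint of Lemma~\ref{lem:hessball} into a single extra factor of two, which is precisely what turns $\sigma = 12$ into the stated contraction factor of $1/24$.
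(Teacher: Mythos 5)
Your proof is correct and takes essentially the same route as the paper's: the same midpoint choice $x^* = \tfrac{1}{2}(x_{k+1} + x_{\mathrm{opt}})$, $y^* = \tfrac{1}{2}(y_k + y_{\mathrm{opt}})$ to place the reference point in the region required by Lemma~\ref{lem:hessball}, an application of Lemma~\ref{lem:altminprog} with $\sigma = 12$, and joint convexity at the midpoint to pay an extra factor of two, yielding the $23/24$ contraction. The only (immaterial) difference is bookkeeping: you absorb the monotone $x$-step before the $y$-step via $f(x_{k+1}, y_k) \leq f(x_k, y_k)$, whereas the paper chains it afterward via $f(x_{k+2}, y_{k+1}) \leq f(x_{k+1}, y_{k+1})$.
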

\begin{proof}
	We can apply Lemma~\ref{lem:altminprog} with the sets defined in Lemma~\ref{lem:hessball}, with $\sigma = 12$. On iteration $k$, consider picking the points $x^*, y^* = \half(x_{k + 1} + x_{\opt}), \half(y_k + y_{\opt})$. Evidently, $x^* \in \mathcal{X}_{k + 1}, y^* \in \mathcal{Y}_k$. Therefore, since $f(x_{k + 1}, y_{k + 1}) \geq f(x_{k + 2}, y_{k + 1})$,
	\begin{equation*}
	f(x_{k + 1}, y_k) - f(x_{k + 2}, y_{k + 1}) \geq f(x_{k + 1}, y_k) - f(x_{k + 1}, y_{k + 1}) \geq \frac{1}{12}(f(x_{k + 1}, y_k) - f(x^*, y^*)).
	\end{equation*}
	Furthermore, by convexity, we have
	\begin{equation*}
	f(x_{k + 1}, y_k) - f(x^*, y^*) \geq \half(f(x_{k + 1}, y_k) - f(x_{\opt}, y_{\opt})).
	\end{equation*}
	Finally, combining these two inequalities and rearranging,
	\begin{equation*}
	\frac{23}{24}(f(x_{k + 1}, y_k) - f(x_{\opt}, y_{\opt})) \geq f(x_{k + 2}, y_{k + 1}) - f(x_{\opt}, y_{\opt}).
	\end{equation*}
	Thus, by taking a $y$ step and then an $x$ step, we decrease the function error by a $1/24$ factor.
\end{proof}

Finally, we show that steps of the alternating minimization can be implemented in linear time.

\begin{lemma}
	\label{lem:fastimpl}
	For $f(x, y)$ defined in \eqref{eq:proxfunction}, we can implement the steps
	\begin{enumerate}
		\item $x_{k + 1} \defeq \textup{argmin}_x f(x, y_k)$.
		\item $y_{k + 1} \defeq \textup{argmin}_y f(x_{k + 1}, y)$.
	\end{enumerate}
	restricted to the relevant domains, in time $O(n^2)$.
\end{lemma}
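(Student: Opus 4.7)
The plan is to show that each of the two alternating subproblems has a closed-form (or nearly closed-form) solution whose only nontrivial computational cost is a single matrix-vector product against $A$ (or $A^\top$), and since $A$ has exactly $n^2$ nonzeros, both updates reduce to $O(n^2)$ work.

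First I would analyze the $x$-step. With $y = y_k$ fixed, minimizing $f(x, y_k)$ over $x \in \Delta^m$ amounts to minimizing the linear-plus-entropy objective
\[
\inprod{\xi + 2\dmax A^\top (y_k^2)}{x} \;+\; 20\dmax \sum_{j} x_j \log x_j
\]
over the simplex. This is the standard entropy-regularized linear problem, whose unique minimizer is the softmax
\[
x_{k+1} \;\propto\; \exp\!\left(-\tfrac{1}{20\dmax}\bigl(\xi + 2\dmax A^\top (y_k^2)\bigr)\right),
\]
followed by $\ell_1$-normalization. Computing $y_k^2$ entrywise and forming the matrix-vector product $A^\top(y_k^2)$ costs $O(\nnz(A)) = O(n^2)$; the remaining entrywise exponentiation and normalization are also $O(n^2)$.

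Next I would analyze the $y$-step. With $x = x_{k+1}$ fixed, the dependence on $y$ in $r(x,y)$ is the quadratic $2\dmax\, x^\top A^\top(y^2) = 2\dmax \sum_i (Ax_{k+1})_i\, y_i^2$, so the subproblem is
\[
\min_{y \in [-1,1]^{2n}} \;\sum_i \Bigl(\eta_i\, y_i + 2\dmax (Ax_{k+1})_i\, y_i^2\Bigr),
\]
which is \emph{separable} across coordinates. For each $i$, the one-dimensional quadratic has positive leading coefficient (since $Ax_{k+1} \geq 0$ coordinate-wise, and is strictly positive when $x_{k+1}$ comes from the previous softmax step, as each row of $A$ is $n$-sparse), so its unconstrained minimum is $-\eta_i / (4\dmax (Ax_{k+1})_i)$, and we simply clamp to $[-1,1]$. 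Computing $Ax_{k+1}$ is one $O(n^2)$ matrix-vector product; the per-coordinate update is then $O(1)$.

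Combining the two yields the claimed $O(n^2)$ bound, matching the updates inside the inner loops of Algorithm~\ref{alg:full}. The only real subtlety is the non-degeneracy of the $y$-subproblem when some $(Ax_{k+1})_i$ vanishes, but the preceding $x$-step produces strictly positive iterates via the softmax, so $(Ax_{k+1})_i > 0$ for all $i$ and the coordinate-wise formula is well-defined; otherwise any $y_i$ achieves the infimum for that coordinate and we may simply set $y_i = -\mathrm{sign}(\eta_i)$. No other obstacle arises, as the restriction to $\mathcal{X}_{k+1}$ and $\mathcal{Y}_k$ from Lemma~\ref{lem:convergef} is only used in the convergence analysis, not in implementing the updates.
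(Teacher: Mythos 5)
Your proposal is correct and follows essentially the same route as the paper's proof: the $x$-step is the entropy-regularized linear problem over the simplex solved in closed form by a softmax of $-\bigl(\tfrac{1}{20\dmax}\xi + \tfrac{1}{10}A^\top(y_k^2)\bigr)$, the $y$-step is coordinate-wise minimization of a quadratic over $[-1,1]$ via clamping $-\eta_i/(4\dmax(Ax_{k+1})_i)$, and each step costs one matrix-vector product with $A$, which has $n^2$ nonzeros. Your extra remark on the degenerate case $(Ax_{k+1})_i = 0$ is a harmless addition (and, as you note, does not occur after a softmax step), so the argument stands as is.
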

\begin{proof}
	Recall $A$ has $n^2$ nonzero entries, so a matrix-vector multiplication can be performed in this time. Computing $x$ in linear time is straightforward: it is defined by
	\begin{equation*}
	\argmin_{x} \left\langle\gamma, x\right\rangle + \sum_{j \in [n]} x_j \log x_j \text{ such that } x \in \Delta^m, \gamma \defeq \frac{1}{20\dmax}\xi + \frac{1}{10} A^\top (y^2).
	\end{equation*}
	By examining the KKT conditions, it is clear that the minimizing $x$ is proportional to $\exp(-\gamma)$; computing $\gamma$ takes $O(n^2)$ time, as does the simplex projection. 	
	Similarly, computing $y$ in linear time is simple for fixed $x$: it is
	\begin{equation*}
	\argmin_{y} \inprod{\eta}{y} + \inprod{2\dmax Ax}{y^2} \text{ such that } y \in [-1, 1]^{2n},
	\end{equation*}
	which is coordinate-wise decomposable as minimizing a quadratic over an interval.
\end{proof}

\begin{theorem}[Complexity of alternating minimization]
\label{thm:altcomplexity}
We can obtain an $\epsilon/2$-approximate minimizer to the proximal steps required by Algorithm~\ref{alg:dualex} to $\epsilon/2$ accuracy, with the regularizer of \eqref{eq:shermanreg} and $\kappa = 3$, in $O(\log \gamma)$ parallelizable iterations for $\gamma = \log n \cdot \dmax \cdot \epsilon^{-1}$, and $O(n^2 \log \gamma)$ total work.
\end{theorem}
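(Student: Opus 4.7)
The plan is to combine the linear contraction from Lemma~\ref{lem:convergef} with a crude bound on the initial optimality gap and the per-step cost from Lemma~\ref{lem:fastimpl}. A proximal step minimizes a function of the form $f(x,y) = \inprod{\xi}{x} + \inprod{\eta}{y} + r(x,y)$ as in \eqref{eq:proxfunction}, where $(\xi, \eta)$ equals either $s_t$ or $s_t + \frac{1}{\kappa} g(z_t)$ for the current dual state $s_t$ maintained in Algorithm~\ref{alg:dualex}.

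First, I would crudely bound the starting gap $f(z_0) - f(z_{\opt})$ by the range of $f$ over the feasible region $\Delta^m \times [-1,1]^{2n}$. The regularizer contributes $O(\dmax \log n)$, as computed in the proof of Theorem~\ref{thm:mainthm} (namely $20\dmax \log n + 4\dmax$). For the linear terms, note that the gradient operator satisfies $\norm{g^x(z)}_\infty = \norm{d + 2\dmax A^\top y}_\infty = O(\dmax)$ and $\norm{g^y(z)}_1 = 2\dmax \norm{b - Ax}_1 = O(\dmax)$, using $\norm{A^\top}_{\infty \to \infty} = 2$, $\norm{b}_1 = 2$, and $\norm{Ax}_1 \leq 2$. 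Since $s_t$ accumulates at most $T = O(\dmax \log n / \epsilon)$ scaled copies of $g$ (the intermediate step adds at most one more), we obtain $\norm{\xi}_\infty = O(\dmax^2 \log n / \epsilon)$ and $\norm{\eta}_1 = O(\dmax^2 \log n / \epsilon)$, so $\inprod{\xi}{\cdot}$ over the simplex and $\inprod{\eta}{\cdot}$ over the box each have range $O(\dmax^2 \log n / \epsilon)$. Summing, the initial gap is bounded by $D = O(\dmax^2 \log n / \epsilon)$.

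Second, Lemma~\ref{lem:convergef} shows that each alternating round (one $x$-minimization followed by one $y$-minimization) contracts the optimality gap by a factor of $23/24$, and the hypotheses of Lemma~\ref{lem:altminprog} are verified there through Lemma~\ref{lem:hessball}. Hence $k = O(\log(D/\epsilon)) = O(\log(\dmax^2 \log n /\epsilon^2)) = O(\log \gamma)$ rounds suffice to drive the gap down to $\epsilon/2$. By Lemma~\ref{lem:fastimpl}, each half-round is a closed-form update (an entropy-regularized simplex projection for $x$, and a coordinate-decoupled clipped quadratic minimization over the box for $y$) computable in $O(n^2)$ parallel work, dominated by a matrix-vector product with $A$. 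Multiplying yields total work $O(n^2 \log \gamma)$, matching the claim.

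The main technical care required is the range bookkeeping: the linear term over the box $[-1,1]^{2n}$ must be controlled through $\norm{\eta}_1$ (the dual norm compatible with $\ell_\infty$) rather than a crude $\ell_\infty$ bound, which would introduce a spurious factor of $n$ into $D$. Fortunately $g^y(z) = 2\dmax(b - Ax)$ is uniformly $\ell_1$-bounded over the feasible region, since both $b$ and $Ax$ lie in a scaled simplex, and this $\ell_1$ bound propagates through the accumulation into $\eta$.
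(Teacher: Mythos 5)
Your proposal is correct and follows essentially the same route as the paper's proof: bound the initial suboptimality gap polynomially by controlling the accumulated dual state via $\norm{s_t^x}_\infty$ paired with the $\ell_1$ diameter of the simplex and $\norm{s_t^y}_1$ paired with the $\ell_\infty$ diameter of the box (plus the $O(\dmax \log n)$ range of $r$), then invoke the $23/24$ contraction of Lemma~\ref{lem:convergef} and the $O(n^2)$ per-step cost of Lemma~\ref{lem:fastimpl} to get $O(\log\gamma)$ iterations and $O(n^2\log\gamma)$ work. The norm-pairing subtlety you flag for the box block is exactly the bookkeeping the paper performs.
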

\begin{proof}
By Lemmas~\ref{lem:convergef} and~\ref{lem:fastimpl}, we can spend $O(n^2)$ parallelizable work to decrease the suboptimality gap by a $1/24$ factor, so it remains to argue that the initial error is at most $\textrm{poly}(\log n, \dmax, \epsilon^{-1})$ to show that implementing the proximal steps to additive error $\epsilon / 2$ can be done in $O(\log \gamma)$ iterations. We show that this is true for implementing the proximal step for $z_t$; a similar argument holds for $w_t$. To this end, note that by our setting of $\kappa$, for any $z$ where we let $g(z) = (g^x(z), g^y(z))$,
\begin{align*}
\frac{1}{2\kappa} \norm{g^x(z)}_{\infty} = \frac{1}{6}\norm{d + 2\dmax A^\top y}_{\infty} \leq \frac{\dmax}{2},\\
\frac{1}{2\kappa}\norm{g^y(z)}_1 = \frac{1}{6}\norm{2\dmax(b - Ax)}_1 \leq \frac{4\dmax}{3}.
\end{align*}
Therefore, for $s_t = (s^x_t, s^y_t)$, by the triangle inequality, and $t \leq 12\Theta / \epsilon$ the bound on the number of steps required where $\Theta$ is the range of $r$, we have
\begin{align*}
\norm{s_t^x}_{\infty} \leq t \cdot \frac{1}{2\kappa} \norm{g^x(z)}_{\infty} \leq \frac{6\dmax\Theta}{\epsilon},\\
\norm{s_t^y}_{1} \leq t \cdot \frac{1}{2\kappa}\norm{g^y(z)}_1 \leq \frac{16\dmax\Theta}{\epsilon}.
\end{align*}
A simple calculation yields $\Theta = 20\dmax \log n + 4\dmax$ upper bounds the range of $r$. Finally, let $x_t^*, y_t^*$ be the minimizer of the proximal objective,
\begin{equation*}
\inprod{s^x_t}{x} + \inprod{s^y_t}{y} + r(x, y).
\end{equation*}
For any initialization $x_{\textrm{init}}, y_{\textrm{init}}$ to the alternating minimization, the suboptimality gap is given by
\begin{align*}
\inprod{s^x_t}{x_{\textrm{init}} - x_t^*} + \inprod{s^y_t}{y_{\textrm{init}} - y_t^*} + r(x_{\textrm{init}}, y_{\textrm{init}}) - r(x_t^*, y_t^*)\\ 
\leq \norm{x_{\textrm{init}} - x_t^*}_1\norm{s^x_t}_\infty + \norm{y_{\textrm{init}} - y_t^*}_{\infty}\norm{s^y_t}_1 + \Theta \leq \left(\frac{44\dmax}{\epsilon} + 1 \right)\Theta. 
\end{align*}
Therefore, the total number of iterations required is bounded by $24 \log\left(\left(\frac{88\dmax}{\epsilon^2} + \frac{2}{\epsilon}\right)\Theta\right)$ as desired.
\end{proof}
	\section{Missing proofs from Section~\ref{sec:rounding}}
\label{app:rounding}

In this section, we give the proof to Theorem~\ref{thm:rounding}.

\restateRounding*

\begin{proof}
	The algorithm is Algorithm~\ref{alg:round}. We adopt the alternative view of $\tilde{x}$ as a $n \times n$ matrix $\tilde{X}$ in the simplex, and define operations $r(X) = X\1, c(X) = X^\top\1$, recalling the first and last $n$ entries of $b$ are $r, c$, i.e. the row and column constraints. Recall we assume we have
	\begin{equation*}
	\norm{r(\tilde{X}) - r}_1 + \norm{c(\tilde{X}) - c}_1 \leq \delta.
	\end{equation*}
	Clearly all operations in Algorithm~\ref{alg:round} take $O(n^2)$ time. To explain briefly, $X'$ is fixed so that its row sums are feasible (i.e. $X'\1 \leq r$) and $X''$ is fixed so that its column sums are feasible. Further, entrywise $X'' \leq X' \leq \tilde{X}$, so $X''$ is feasible. We first bound
	\begin{equation*}
	d \defeq \norm{X'' - \tilde{X}}_1 = \left(\sum_{i: r_i(\tilde{X}) > r_i} r_i(\tilde{X}) - r_i\right) + \left(\sum_{j: c_j(X') > c_j} c_j(X') - c_j\right).
	\end{equation*}
	Note $\norm{r(\tilde{X}) - r}_1 \geq \sum_{i: r_i(\tilde{X}) > r_i} r_i(\tilde{X}) - r_i$. Further, by $X' \leq \tilde{X}$ entrywise,
	\begin{equation*}
	\sum_{j: c_j(X') > c_j} c_j(X') - c_j \leq \norm{c(\tilde{X}) - c}_1.
	\end{equation*}
	Thus $d \leq \delta$. $\hat{X} \in \urc$, since $e_r, e_c \geq 0$ and $\1^\top e_r = \1^\top e_c = e$, so $\hat{X}\1 = r, \; \hat{X}^\top \1 = c$. Also,
	\begin{equation*}
	\norm{\hat{X} - \tilde{X}}_1 \leq \norm{X'' - \tilde{X}}_1 + \norm{\hat{X} - X''}_1 \leq \delta + e.
	\end{equation*}
	Finally,
	\begin{equation*}
	e = 1 - \1^\top X'' \1 = 1 - \left(\1^\top \tilde{X} \1 - d\right) = d.
	\end{equation*}
	Thus using $d \leq \delta$ proves the claim.
\end{proof}
	\section{Experiment details}
\label{app:experiment_details}

Here, we give the implementation details for the experimental results discussed in Section~\ref{sec:experiments}, and a brief justification of experimental decisions we made.

{\bf Dataset.} For both figures in Section~\ref{sec:experiments}, we had the following experimental setup. We randomly sampled a pair of digits from the MNIST dataset corresponding to the digit 1, and added a small amount of background noise for numerical stability, as is standard in the literature \cite{AltschulerWR17}. We downsampled the $28 \times 28$ pixel images to size $14 \times 14$ by skipping every other pixel to speed up experiments. Similar performances were observed across multiple random instances. Finally, the cost metric used was by Manhattan distance on the 2-dimensional grid.

{\bf Objective value.} For simplicity, in all cases we measured objective value by the overestimate presented in \eqref{eq:pdl1reg}. By the proof of Lemma~\ref{lem:l1rounding}, this is an overestimate to the true objective after performing the rounding procedure in Algorithm~\ref{alg:round}. In practice, we observed that this overestimate was negligibly different from the objective after rounding.

{\bf Sinkhorn implementation details.} We implemented the standard Sinkhorn algorithm, using different settings of $\eta^{-1}$. Sinkhorn iteration converges to an $\epsilon$-approximate transportation plan in theory when $\eta$ is very large, roughly $\log n / \epsilon$. However, in practice, it is observed that much smaller values of $\eta$ suffice for rapid convergence. We tracked the convergence of Sinkhorn iteration for $\eta = 70$ and $\eta = 5$, which we considered close to a theoretically guaranteed parameter and a much less conservative practical parameter, respectively. The optimized Sinkhorn algorithm converged at rates much faster than the predicted $\epsilon^{-2}$ rate on all experiments, outperforming all other methods, which we believe merits further investigation. Significantly larger values of $\eta$ led to numerical stability issues when computing $\exp(-\eta C)$.

{\bf APDAMD implementation details.} We implemented the APDAMD algorithm (Algorithm 4 in \cite{LinHJ19}), with the quadratic regularizer (i.e. $\frac{1}{2\gamma}\norm{\lambda}_2^2$). We observed that the amount of the quadratic regularizer added did not affect the practical convergence of the algorithm. A simple reason for this is because the algorithm builds in a more aggressive step-size strategy, because the pessimistic $\gamma = O(n)$ is often too conservative to be necessary in practice. The figure tracks APDAMD convergence with $\eta = 10^{-2}, \epsilon = 10^{-3}$.

{\bf Mirror prox.} For numerical stability considerations, we implemented our algorithm as an instance of mirror prox \cite{Nemirovski04}, another extragradient method which takes local iterations rather than accumulating a dual operator and taking steps with respect to some $\bar{z}$ (i.e. dual extrapolation). Although there is not a known proof of mirror prox convergence with an area-convex regularizer, we find this decision reasonable for several reasons. In general, variations of entropic mirror descent are well-known to be equivalent to their dual averaging versions; it is likely that a similar equivalence can be drawn between mirror prox and extragradient dual averaging, i.e. dual extrapolation. Furthermore, the standard proofs of dual extrapolation and mirror prox are quite similar; we believe it is likely that area-convexity results in convergence for mirror prox, although this merits further investigation.

{\bf Termination.} We terminated our alternating minimization procedure when the movement of iterations in $\ell_1$ was negligible. Typically, we observed that 3-5 alternating steps sufficed for convergence.

{\bf Step sizes.} We varied two parameters in our experiments: the step size $\frac{1}{\kappa}$ used in our extragradient algorithm, and the amount of entropy used in our regularizer (in the paper, we used 10 times entropy compared to the quadratic component $x^\top A^\top (y^2)$). One reason this may be reasonable in practice is similar to the observed behavior of the Sinkhorn iteration tuning the $\eta^{-1}$ parameter, and APDAMD performing a more-aggressive line search for the observed amount of regularizer necessary. To this end, we plotted the performance of three settings of our algorithm. 
\begin{itemize}
	\item In the ``unoptimized constants'', we set the constants to roughly those with theoretical guarantees, i.e. 10 times entropy and step size 1.
	\item In the ``reasonably optimized constants'', we set the amount of entropy to be 4, and the step size to be $\dmax / 3$, to offset the $\dmax$ multiple of the regularizer used in our iterations. For smaller values of $\epsilon$, these settings compared favorably with APDAMD.
	\item In the ``optimized constants'', we set the amount of entropy at 3, and the step size at $\dmax$. This setting outperformed APDAMD and was more competitive with Sinkhorn iteration.
\end{itemize}

{\bf Discussion.} We believe multiple interesting avenues of exploration arise from our experiments.
\begin{itemize}
	\item Sinkhorn with aggressively chosen $\eta$ outperformed all other methods we benchmarked against, and converged at rates faster than suggested by its known analyses. It may prove fruitful to study if further assumptions about practical instances explain this discrepancy.
	\item Directly accelerated methods such as APDAMD also exhibit $\epsilon^{-1}$ convergence rates, at the cost of a worse dependence on dimension. However, this worst-case dependence can be mitigated if the instance is favorable in practice, i.e. by choosing $\gamma \approx O(1)$. This was observed to be the case in our experiments for the MNIST dataset. It is interesting to see if a similar adaptive tuning applies to our method with provable guarantees.
	\item Our method did not exhibit instability when changing the amount of entropy in the regularizer, but it did exhibit vastly-improved convergence. It is possible that the amount of regularizer needed is not quite so large, perhaps through a more careful analysis.
	\item We did not benchmark against the greedy Sinkhorn method of \cite{AltschulerWR17}, or consider numerical speedups such as those in \cite{AltschulerBRW18}. It remains open to explore if these practical speedups are applicable to first-order methods such as ours as well.
\end{itemize}

	\end{appendix}
	
\end{document}